\def\LT{{\mathbb{LT}}}
\def\bft{\mathbf t}
\def\ubft{\underline\bft}
\def\bfsigma{{\mbox{\boldmath${\sigma}$}}}
\newcommand{\N}{{\mathbb{N}}}
\newcommand{\Z}{{\mathbb{Z}}}
\newcommand{\R}{\mathbb{R}}
\newtheorem{theo}{Theorem}
\newtheorem{lm}{Lemma}
\newtheorem{cor}{Corollary}
\newtheorem{prop}{Proposition}
\begin{document}

\begin{center}

{ \large \bf Potts model coupled to causal triangulations}

\vspace{30pt}

{\sl J. Cerda-Hern\'andez}$\,^{a}$ 

\vspace{24pt}

{\footnotesize
$^a$~Institute of Mathematics, Statistics and Scientic Computation,\\ 
University of Campinas - UNICAMP,\\
Rua S\'ergio Buarque de Holanda 651, CEP 13083-859, Campinas, SP, Brazil.\\
E-mail: javier@ime.usp.br.
}

\vspace{48pt}

\end{center}

\begin{abstract}
In this work we study the annealed Potts model coupled to
two dimensional causal triangulations with periodic 
boundary  condition. Using duality on a torus, we provide a relation between the free energy of the Potts model coupled
CTs and its dual. This duality relation follows from the FK representation for the 
Potts model. In order to determine a region where the critical curve for the model can be located we use the 
duality relation and the high-temperature expansion. 
This is done by outlining a region where the infinite-volume Gibbs measure exists and is unique and a region where
the finite-volume Gibbs measure has no weak limit (in fact, does not
exist if the volume is large enough). We also provide lower and upper
bounds for the infinite-volume free energy. 
\\ \\
\textbf{2000 MSC.} 60F05, 60J60, 60J80.\\
\textbf{Keywords:} causal triangulation (CT), Potts model, FK-Potts model, Gibbs measure.
\end{abstract}


\newpage

\section{Introduction}

Causal triangulation (CT), introduced by Ambj{\o}rn and Loll (see \cite{Ambjorn:1998xu}), together with 
its predecessor  a dynamical triangulation (DT),  constitute attemps to provide a meaning to the formal expressions 
appearing in the path integral quantisation of gravity (see \cite{Ambjorn:1997di}, 
\cite{Ambjorn:2006} for an overview). The idea is to approximate the emerging  geometries by CTs. As a result, we obtain 
a discrete version of the path integral where 
continuum geometries are replaced by a sum over all possible triangulations,  where each triangulation
is weighted by a Boltzmann factor $e^{-\mu |T|}$, with $|T|$ standing for the size of
the triangulation and $\mu$ being the cosmological constant. Then, evaluation of the
partition function is reduced to a purely combinatorial problem that can be solved
utilizing the approach developed in the early work of Tutte \cite{Tutte1962a,Tutte1963} or techniques based 
on random matrix models (see, e.g.,
\cite{DiFrancesco:1993nw}) and bijections to well-labelled trees (see \cite{Schaeffer1997,
bouttier-2002-645}).

Putting a spin system on the collection of all causal triangulations is generally interpreted  as
a coupling gravity with matter, which makes it interesting to study  the
$q$-state  Potts model coupled to CTs from  a physical point of view.
The clasical example  of such model is  the  two-state  Potts model (Ising model) 
coupled to a CT introduced in \cite{Ambjorn:1999gi}. For the Ising model existence of Gibbs measures and phase transitions
has been recently proved  (see \cite{Ambjorn:1999gi}, \cite{Benedetti:2006rv}, 
\cite{HeAnYuZo:2013}, \cite{cerda} and \cite{Napolitano:2015}  for details). For the numerical results for the 3-state Potts model
coupled to CTs we refer the reader to \cite{Ambjorn:2008jg}.
In this work we focus on the $q$-state Potts model coupled to CTs for any $q\geq 2$. Our main goal is to
derive properties of the phase transition for the model and the critical curve by 
defining a region in the quadrant of parameters $\beta,\mu >0$ where the infinite-volume free energy  has a limit, which  
implies uniqueness of the Gibbs measure. In adition, we prove in Corollary \ref{asympt_beh} that the critical curve of the model is 
asymptotic if $\beta$ is large  to $\frac{3}{2}\beta+\ln2$, and  $C(q) +o(\beta^2)$ if  $\beta$ is small, where $C(q)=\ln2\sqrt{q}$ for 
the Potts model coupled on CTs, and $C(q)=\ln2q$ for its dual. 
(see Figure \ref{fig5}). In order to obtain these results we utilize the FK-Potts models, introduced  by Fortuin and 
Kasteleyn (see \cite{FK:1972}). These representations were successfully employed in order to obtain 
important results for Ising and Potts models on the hypercubic lattice. 
Since this representation permits the use of geometric properties of the triangulations, we utilize it to 
derive a duality relation for the parameters of the model and asymptotic behavior of the critical curve.

In general, the FK-Potts model on a finite connected graph (not necessarily planar) is a model of edges of the graphs, where each
edge is either closed or open. The probability of a  given configuration is proportional to 
$$p^{\#\mbox{open edges}} (1-p)^{\#\mbox{closed edges}} q^{^{\#\mbox{clusters}}},$$
where  $p\in[0,1]$ and  the cluster-weight $q\in(0,\infty)$ are the parameters of the model. 
For $q\geq 1$, the model can be 
extended to an infinite graph. In this case the model exhibits a phase transition for some critical parameter $p_c(q)$, which
depends on the geometry of the graph. In the case of planar graphs, there exits a relation between  FK-Potts  models 
on a graph and its dual, with the same parameter $q$ and appropriately related 
parameters $p$ and $p^*=p^*(p)$. For a detailed introduction to the FK-Potts model
we refer the reader to \cite{Grimmett:2006}.
In the case of FK-Potts model defined on a causal triangulation  $\bft$ with periodic boundary 
condition (see Figure \ref{fig1} for a geometric representation), the partition
function of the FK-Potts model on $\bft$  cannot be written exactly as a partition function of a FK-Potts model on its dual $\bft^*$,
however it will be sufficient in order to obtain a duality relation of the 
parameters in the thermodynamic limit. This relation together with 
the Edwars-Sokal coupling, using $p=1-e^{-\beta}$, permits find a relation 
between the parameters $(\beta,\mu)$ of the Potts model coupled to CT and the parameters $(\beta^*,\mu^*)$ of  its dual 
for the infinite-volume (thermodynamic limit).

The paper is organized as follows. In Section \ref{Sect2}, we introduce notation and give a summary of the 
main features of CTs. We also define the annealed Potts model coupled to CTs, and we establish the main results of 
this work, Theorem \ref{theo_duality_main1} and \ref{theo_bounds_main2}. In Section \ref{FK_Potts model}, 
we describe the FK-Potts model and establish the technical Lemma \ref{duality} of duality that we will be used 
in next sections. Section  \ref{Sect3} contains 
the proof of Theorem \ref{theo_duality_main1}. This result 
will play a key role in the proof of Theorem \ref{theo_bounds_main2}. Also, in Corollary \ref{asympt_beh} we provide asymptotic 
behavior for the critical curve. 
In Section  \ref{proof_theo2}, utilizing the High-T expansion for $q$-state Potts model, we prove Theorem \ref{theo_bounds_main2}.
The paper concludes with Section \ref{conn_ising}, where we compare our results with the properties obtained in 
\cite{HeAnYuZo:2013} and \cite{Ambjorn:1999gi}.

\section{Notations and main results}\label{Sect2}

In this section we firts introduce notation and give a summary of the causal dynamical triangulation, 
$q$-state Potts model and we define the Potts model coupled CTs. Finally, we give a short of the Edwards-Sokal coupling. 
We refer to \cite{MYZ2001}, \cite{Grimmett:2006}, \cite{HeAnYuZo:2013}, for more details.
We attempt at establishing  regions  where the infinite-volume free energy converges, yielding results on the 
convergence and asymptotic properties of the partition function and the Gibbs measure. 

\subsection{Two-dimensional Lorentzian models}\label{Sect2.1}

We will work with rooted causal dynamic triangulations of the cylinder 
$C_N = {\mathcal S}\times [0,N]$, $N = 1, 2, \dots$, which 
have $N$ bonds (strips) ${\mathcal S}\times [j,j+1]$.  Here ${\mathcal S}$ stands for
a unit circle (see Figure \ref{fig1} for a geometric representation of a CT). Formally, a triangulation 
$\bft$ of $C_N$ is called a {\it rooted causal dynamic 
triangulation} (CTD) if the following conditions hold:
\begin{itemize}
\item each triangular face of $\bft$ belongs to some strip $\mathcal S \times [j, j + 1]$, $j =
1, \dots, N-1$, and has all vertices and exactly one edge on the boundary
$(\mathcal S \times \{j\}) \cup (\mathcal S\times \{j+1\})$ of the strip $\mathcal S\times [j, j + 1]$;
\item the number of edges on $\mathcal S \times \{j\}$ should be finite for any $j = 0, 1, \dots, N-1$: let  $n^j = n^j(\ubft)$ be  the number of edges on $\mathcal S \times \{j\}$, then 
$1 \leq n^j < \infty$ for all $j = 0, 1, \dots, N-1$.
\end{itemize}
and have a root face, with the anti-clockwise ordering of its vertices $(x,y,z)$, where $x$ and $y$ lie
in ${\mathcal S} \times\{0\}$.

The CTs arise naturally when physicists attempt to define a
fundamental path integral in quantum gravity. See \cite{Ambjorn:1997di} for a review of the relevant literatute.  For 
a rigorous mathematical background of the model we refer to \cite{MYZ2001}. Additional properties of CTs have been studied in
\cite{SYZ1}.

A rooted CT $\bft$  of $C_N$ is identified with a {\it compatible sequence}
$$\bft = (\bft(0), \bft(1), \dots, \bft (N-1)),$$ where $\bft (i)$ is a triangulation of the 
strip ${\mathcal S}\times [i,i+1]$. The compatibility means that 
\begin{equation}\label{compat}
n_{up}(\bft(i+1))=n_{do}(\bft(i)),\quad i= 0,\dots, N-2.
\end{equation}

Note that for any edge lying on the slice $\mathcal S \times \{ i\}$ belongs to exactly two 
triangles: one up-triangle from $\bft (i)$ and one down-triangle from $\bft (i-1)$. This provides 
the following relation: the number of triangles in the triangulation $\bft$, denoted by $n(\bft)$, is twice the total number of 
edges on the slices. More precisely, remind that  $n^i$ is the number of edges on slice $\mathcal S \times \{i\}$. 
Then, for any $i=0,1,\dots, N-1$,
\begin{equation}\label{et2-yamb}
n(\bft(i)) = n_{up} (\bft (i)) + n_{do}(\bft (i)) = n^i + n^{i+1}.
\end{equation}

In the usual physical approach to statistical models, the computation of the partition function is the firts step towards 
a deep understanding of the model, enabling for instance the computation of the free energy and study phase 
transition of the model. Follow this approach, the 
computation of the partition function for the case of pure CTs, was first introduced and computed in \cite{Ambjorn:1998xu} 
(see also \cite{MYZ2001} for a mathematically rigorous account). 

Since we are interested in the bulk behavior of Potts model coupled to CTs, we will in the 
following for simplicity choose triangulations with  periodical
spatial boundary conditions, i.e. the strip $\bft (N-1)$ is compatible with $\bft (0)$, and let $\LT_N$ denote the set of causal 
triangulations on the cylinder $C_N$ with this boundary condition, thus the partition function for rooted CTs in the 
cylinder $C_N$ with periodical
spatial boundary conditions and for the value of the cosmological constant $\mu$ is given by
\begin{equation} \label{yamb-pf1}
Z_N(\mu)=\sum_{\bft} e^{-\mu n(\bft) } = \sum_{(\bft(0), \dots, \bft(N-1))} \exp \Bigl\{-\mu \sum_{i=0}^{N-1} n(\bft(i)) \Bigr\}.
\end{equation}

Moreover, the periodical spatial boundary condition on the CTs permits to write the partition function  
$Z_N(\mu)$ in a trace-related form 
\begin{equation}\label{yamb-pf1-2}
Z_N(\mu)= \mbox{tr}\; \bigl( U^N \bigr).
\end{equation}
This gives rise to a transfer matrix $U=\{u(n,n^\prime)\}_{n,n^\prime = 1, 2, \dots}$
describing the transition from one spatial strip to the next one. It is an infinite matrix with
positive entries
\begin{equation}\label{yamb-tmpg}
    u(n,n^\prime) = \binom{n+n^\prime-1}{n-1}e^{-\mu(n+n^\prime)}.
\end{equation}

\begin{figure}[t]
\begin{center}
\includegraphics[height=6cm,width=12cm]{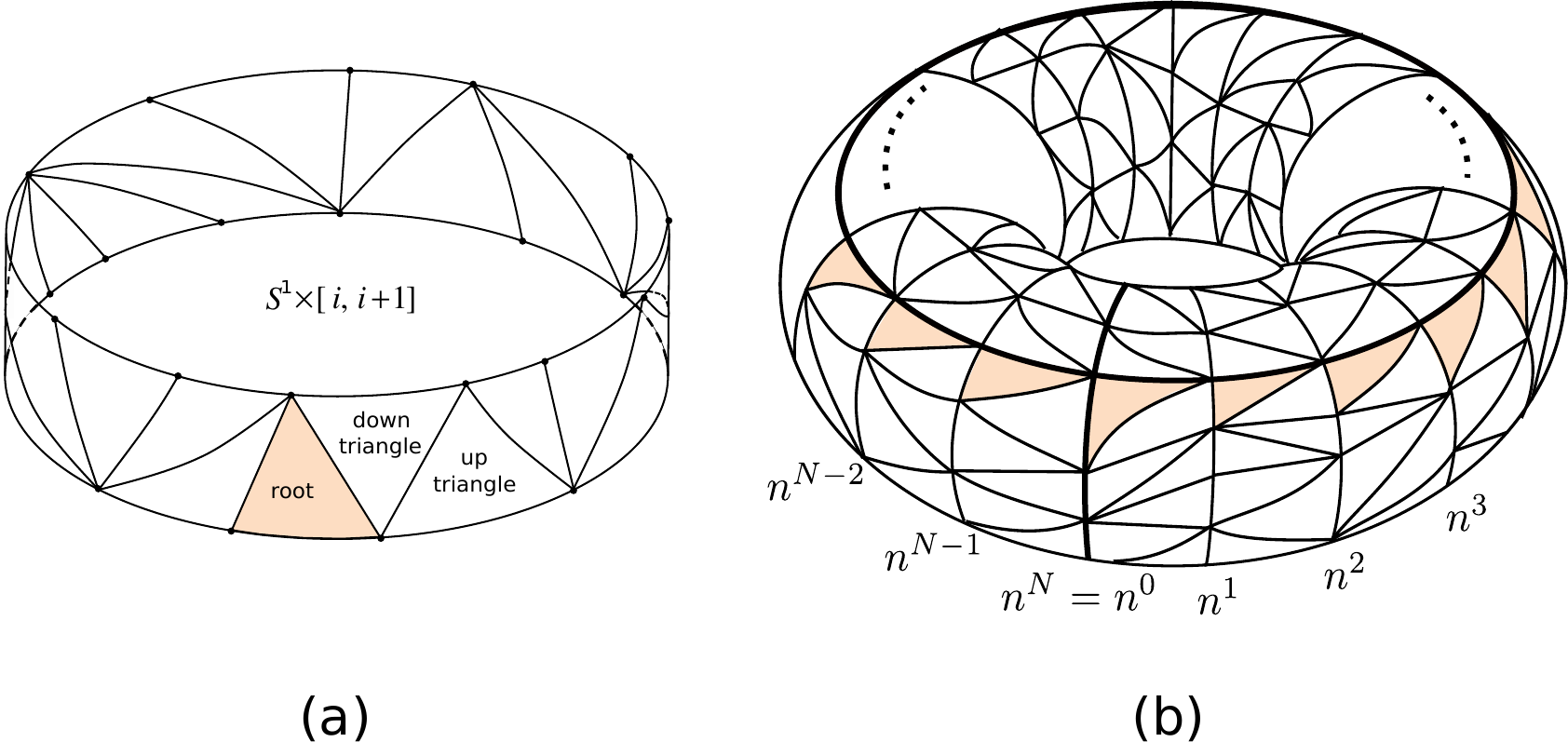}
\end{center}
\caption{(a) A strip triangulation of ${\mathcal S}\times [i,i+1]$. (b) Geometric representation of a CT
with periodic spatial boundary condition.}%
\label{fig1}
\end{figure}

Employing  the $N$-strip partition function for pure CTs with periodical boundary condition, defined by the formula 
(\ref{yamb-pf1}), we define the  $N$-strip Gibbs probability distribution for pure CTs
\begin{equation}\label{Q}
 \mathbb{Q}_{N,\mu}(\bft)=\frac{1}{Z_N(\mu)} \mbox{e}^{-\mu n(\bft)}.
\end{equation}
In 2001, the paper \cite{MYZ2001} computed the partition function of the model and proved  existence of a weak limit of 
the measures $\mathbb{Q}_{N,\mu}$ for $\mu\geq \ln2$,  using transfer-matrix formalism 
and tree parametrization of Lorentzian triangulations (CTs). The weak limit measure  permits describe each triangulation
via  a positive recurrent Markov chain in the subcritical case $\mu>\ln2$, and as the branching process with
geometric offspring distribution with parameter $1/2$, conditioned to non-extinction at infinite in the critical 
case, $\mu=\ln2$ (see \cite{MYZ2001} for more details).

The transfer-matrix formalism suggests  that, as $N\to \infty$, the partition
function is controlled by the largest eigenvalue $\Lambda$ of the transfer matrix (\ref{yamb-tmpg}):
\begin{equation}\label{aproxLambda}
Z_N(\mu) = {\rm tr}\;U^N \sim\Lambda^N,
\end{equation}
where
\begin{equation}\label{Lambda(g)}
\Lambda:= \Lambda(\mu)=\left[ \frac{1- \sqrt{1-4\exp(-2\mu)}}{2\exp(-\mu)}\right]^2.
\end{equation}
That heuristic result was proved in \cite{HeAnYuZo:2013}.
The following properties hold and will be utilized  to  prove the main results.

\noindent {\bf Property 1.} {\rm{(Theorem 1 in \cite{MYZ2001}).}} {\it For any $\mu > \ln 2$ 
the following relation holds true:
\begin{equation}\label{yamb-e13}
\lim_{N\to\infty}\frac{1}{N}\ln\,Z_N(\mu)=\ln\,\Lambda (\mu).
\end{equation}
Further, the $N$-strip Gibbs measure $\mathbb{Q}_{N,\mu}$ converges 
weakly to a limiting measure $\mathbb{Q}_{\mu}$.}

\noindent {\bf Property 2.} {\rm{(Proposition 5, \cite{MYZ2001}).}} {\it For any 
$\mu <\ln 2$, the $N$-strip partition function $Z_N(\mu)$ for pure CTs  exists only if 
\begin{equation}\label{yamb-e14}
\mu > \ln \left( 2\cos \displaystyle\frac{\pi}{N+1} \right).
\end{equation}}
Another proof  of Property 1 can be found in \cite{HeAnYuZo:2013}. In order to prove this property the authors 
utilize the transfer-matrix formalism and Krein-Rutman theorem. Inequality (\ref{yamb-e14}) in Property 2 implies that
if $\mu<\ln2$, then there exists  $N_0\in\N$ such that  $Z_N(\mu)=\infty$ if $N>N_0$.

\subsection{Potts model coupled to CT} \label{sect2.2}

Let $\bft$ be a CT on the cylinder $C_N$ with periodic boundary condition. Each triangulation 
$\bft$ can be view as a graph $\bft=(V(\bft),E(\bft))$ embedded on a torus. 
Potts spin systems are generalizations of the Ising model. Whereas
in Ising systems the spins on two different values, in the $q$-state Potts model 
$q$ distinct values, represent by the elements of the set $\{1,\dots,q\}$,  are allowed on any 
vertex from the triangulation  $\bft$.  We consider the product sample space $\Omega(\bft)=\{1,\dots,q\}^{V(\bft)}$ and  
we consider a usual (ferromagnetic) $q$-state Potts model energy, where two spins
$\sigma (t)$ and
$\sigma (t')$ interact if their supporting vertices $t$, $t'$
are connected by an common edge; such vertices are called nearest neighbors, and this
property is reflected in the notation $\langle t, t'\rangle$.
Thus, the Hamiltonian used for the  $q$-state Potts  model on $\bft$ is given by
\begin{equation}\label{hamilton}
{\mathbf h}(\bfsigma )=-\sum_{\langle t,t' \rangle}\delta_ {\sigma (t),\sigma (t')}.
\end{equation}
The partition function for the $q$-state Potts model on $\bft$ is define by 
\begin{equation}
Z_{P}(\beta,q,\bft) = \sum_{\bfsigma} \exp\Bigl\{-\beta\mathbf{h}(\bfsigma) \Bigr\},
\end{equation}
where the summation is over any configurations $\bfsigma\in \{1,\dots,q\}^{V(\bft)}$. 
Thus, the $q$-state Potts measure on $\bft$ is define as follows
\begin{equation}
\mu^{\bft}_{\beta,q} (\bfsigma)= \frac{1}{Z_{P}(\beta,q,\bft)}
\exp\Bigl\{-\beta\mathbf{h}(\bfsigma) \Bigr\}.
\end{equation}
Using the partition
function  for the $q$-state Potts model on a fixed $\bft$, we define the partition function for the annealed $q$-state Potts 
model  coupled to CTs, at inverse temperature $\beta >0$ and cosmological constant $\mu$, as follows
\begin{equation}\label{yamb-pf}
\Xi_N(\beta,\mu)=\sum_{\bft} \exp\Bigl\{ -\mu n(\bft) \Bigr\} Z_{P}(\beta,q,\bft)
\end{equation}
where $n(\bft)$ stands for the number of triangles in the triangulation $\bft$. Similarly, 
we  introduce the $N$-strip Gibbs probability distribution
associated with (\ref{yamb-pf})
\begin{eqnarray}\label{yamb-Gd}
\mathbb P^{\beta,\mu}_N (\bft,\bfsigma) &=& \frac{1}{\Xi_N (\beta,\mu )} \exp\Bigl\{-\mu n(\bft) -\beta {\bf h}(\bfsigma)\Bigr\},
\end{eqnarray}
and we denote by $\mathcal{G}_{\beta,\mu}$ the set of {\it Gibbs measures} given by the closed convex hull of the set
of weak limits:
\begin{equation}
 \mathbb P^{\beta,\mu}=\lim_{N\to\infty} \mathbb P^{\beta,\mu}_N,
\end{equation}

In general, the $q$-state  Potts model  can be defined  on a general lattice $G$. Therefore, it is possible define the 
$q$-state  Potts model on the dual $\bft^*$ of the triangulation $\bft$ (see next section for a formal definition 
of $\bft^*$). The 
partition function for the $q$-state  Potts model on $\bft^*$ will denote by  $Z_{P}(\beta^*,q,\bft^*)$. 
Finally, we define the partition function for the $q$-state Potts model  coupled to dual CTs  $\Xi_N^*(\beta^*,\mu^*)$
as follow
\begin{equation}\label{yamb-pf_dual}
\Xi_N^*(\beta^*,\mu^*)=\sum_{\bft} \exp\Bigl\{ -\mu^* n(\bft) \Bigr\} Z_{P}(\beta^*,q,\bft^*).
\end{equation}

\subsection{Main results}\label{main_results}
In  the present article, we prove the following duality relation.

\begin{theo}\label{theo_duality_main1}
Let $q\geq 2$. The free energy of the $q$-state Potts model coupled to causal triangulation and its dual satisfied 
the following duality relation
\begin{equation}\label{eq2.31}
\lim_{N\to\infty}\frac{1}{N} \ln \Xi_N(\beta,\mu) =\lim_{N\to\infty} \frac{1}{N} \ln \Xi_N^*(\beta^*, \mu^*)
\end{equation}
where  $\Xi_N$, $\Xi_N^*$ denote the partition function of the  $q$-state Potts model coupled to CT and coupled dual CT 
respectively (defined in Section \ref{sect2.2}), and 
\begin{equation}\label{duality_relation}
\beta^*= \ln\left( 1+ \frac{q}{e^\beta -1} \right), \quad \mu^* = \mu - \frac{3}{2}\ln(e^\beta -1) + \ln q.
\end{equation}
\end{theo}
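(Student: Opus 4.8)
The plan is to pass through the Fortuin--Kasteleyn (random-cluster) representation and exploit planar duality on the torus, keeping careful track of the exponents of $q$ and of the volume $n(\bft)$. First I would rewrite, for each fixed triangulation $\bft$, the Potts partition function in random-cluster form. Writing $v=e^{\beta}-1$ and expanding $e^{\beta\delta_{\sigma(t),\sigma(t')}}=1+v\,\delta_{\sigma(t),\sigma(t')}$ over all edges and summing over spins gives
\begin{equation}
Z_{P}(\beta,q,\bft)=\sum_{\omega\subseteq E(\bft)} v^{|\omega|}\, q^{k(\omega)},
\end{equation}
where $k(\omega)$ is the number of connected clusters of $(V(\bft),\omega)$; the same identity on the dual graph reads $Z_{P}(\beta^{*},q,\bft^{*})=\sum_{\eta\subseteq E(\bft^{*})}(v^{*})^{|\eta|}q^{k(\eta)}$ with $v^{*}=e^{\beta^{*}}-1$.

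Next I would record the combinatorial data forced by the torus embedding. Since $\bft$ triangulates a torus with $n(\bft)$ triangular faces, double counting edges gives $|E(\bft)|=\tfrac{3}{2}n(\bft)$, and the Euler relation $V-E+F=0$ forces $|V(\bft)|=\tfrac{1}{2}n(\bft)$; dually $|V(\bft^{*})|=n(\bft)$ and $|E(\bft^{*})|=\tfrac{3}{2}n(\bft)$. Using the edge bijection $e\leftrightarrow e^{*}$ and its induced map $\omega\mapsto\omega^{*}=\{e^{*}:e\notin\omega\}$ (so $|\omega^{*}|=\tfrac{3}{2}n(\bft)-|\omega|$), the core input is the cluster-number duality on the torus supplied by Lemma \ref{duality}, namely
\begin{equation}
k(\omega^{*})=|\omega|-|V(\bft)|+k(\omega)+\delta(\omega),\qquad |\delta(\omega)|\le C_{0},
\end{equation}
where $C_{0}$ is an absolute constant and $\delta(\omega)$ is a correction coming from cycles of $\omega$ that wind around the torus. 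I would then fix the dual coupling by $v^{*}=q/v$, which is exactly $\beta^{*}=\ln\!\bigl(1+q/(e^{\beta}-1)\bigr)$.

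With these substitutions the bookkeeping is routine. Inserting the cluster-duality and $|\omega^{*}|=\tfrac{3}{2}n-|\omega|$ into the dual random-cluster sum, the factor $(q/v^{*})^{|\omega|}=v^{|\omega|}$ reconstructs $Z_{P}(\beta,q,\bft)$, and collecting the remaining $\omega$-independent powers yields, for every $\bft$,
\begin{equation}
Z_{P}(\beta^{*},q,\bft^{*})=q^{\,n(\bft)}\,(e^{\beta}-1)^{-\frac{3}{2}n(\bft)}\sum_{\omega} v^{|\omega|}q^{k(\omega)}q^{\delta(\omega)}.
\end{equation}
Since $q\ge 2$ and $|\delta(\omega)|\le C_{0}$, the factor $q^{\delta(\omega)}$ lies in $[q^{-C_{0}},q^{C_{0}}]$, so $Z_{P}(\beta^{*},q,\bft^{*})$ and $q^{n}(e^{\beta}-1)^{-3n/2}Z_{P}(\beta,q,\bft)$ agree up to a bounded multiplicative constant. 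Multiplying by $e^{-\mu^{*}n(\bft)}$, summing over $\bft$, and choosing $\mu^{*}=\mu+\ln q-\tfrac{3}{2}\ln(e^{\beta}-1)$ so that $e^{-\mu^{*}n}q^{n}(e^{\beta}-1)^{-3n/2}=e^{-\mu n}$, I obtain
\begin{equation}
q^{-C_{0}}\,\Xi_{N}(\beta,\mu)\le \Xi_{N}^{*}(\beta^{*},\mu^{*})\le q^{C_{0}}\,\Xi_{N}(\beta,\mu).
\end{equation}

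Finally I would take logarithms, divide by $N$, and let $N\to\infty$: the bounded factors contribute $\pm\tfrac{C_{0}\ln q}{N}\to 0$, giving the claimed equality of the two free energies (and showing that one limit exists precisely when the other does). The main obstacle is the cluster-number duality on the torus: unlike the planar case, where $k(\omega^{*})=|\omega|-|V|+k(\omega)+1$ is an exact identity, the genus-one topology produces the $\omega$-dependent defect $\delta(\omega)$ from non-contractible cycles, so the partition functions cannot be matched exactly at finite $N$. The decisive—and ultimately easy—point is that $\delta(\omega)$ is controlled by the rank of the first homology of the torus and is therefore bounded uniformly in the volume, which is exactly why the identity survives only asymptotically, as an equality of free energies rather than of finite-$N$ partition functions.
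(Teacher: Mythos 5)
Your proof is correct and follows essentially the same route as the paper: the FK (random-cluster) representation, the torus Euler formula with a bounded topological defect, the edge-complementation bijection with $(e^{\beta}-1)(e^{\beta^{*}}-1)=q$, and a sandwich inequality $q^{-C_{0}}\,\Xi_{N}(\beta,\mu)\le \Xi_{N}^{*}(\beta^{*},\mu^{*})\le q^{C_{0}}\,\Xi_{N}(\beta,\mu)$ (the paper has $C_{0}=1$) whose logarithm vanishes after dividing by $N$. The only cosmetic differences are that you expand $Z_{P}$ directly as $\sum_{\omega}v^{|\omega|}q^{k(\omega)}$ rather than passing through the Edwards--Sokal identity $Z_{FK}=e^{-\beta|E(\bft)|}Z_{P}$, and the cluster-number duality you invoke is really the paper's Proposition \ref{Euler} (Euler's formula on the torus), not Lemma \ref{duality}, which is the resulting comparison of partition functions.
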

Thus, equation (\ref{eq2.31}) relates the free energy of the $q$-state
Potts model coupled to CTs  with
the free energy of its dual, and maps the high and low 
temperature  of the dual models onto each other.
We will use the duality relation of Theorem \ref{theo_duality_main1} and the high-temperature expansion for the 
$q$-state Potts model  for determine a region in the
quadrant of parameters where the critical curve for the $q$-state Potts model
coupled CTs and its dual can be located (see Figure \ref{fig5}). 

\begin{figure}[t!]
\begin{center}
\includegraphics[width=12.5cm]{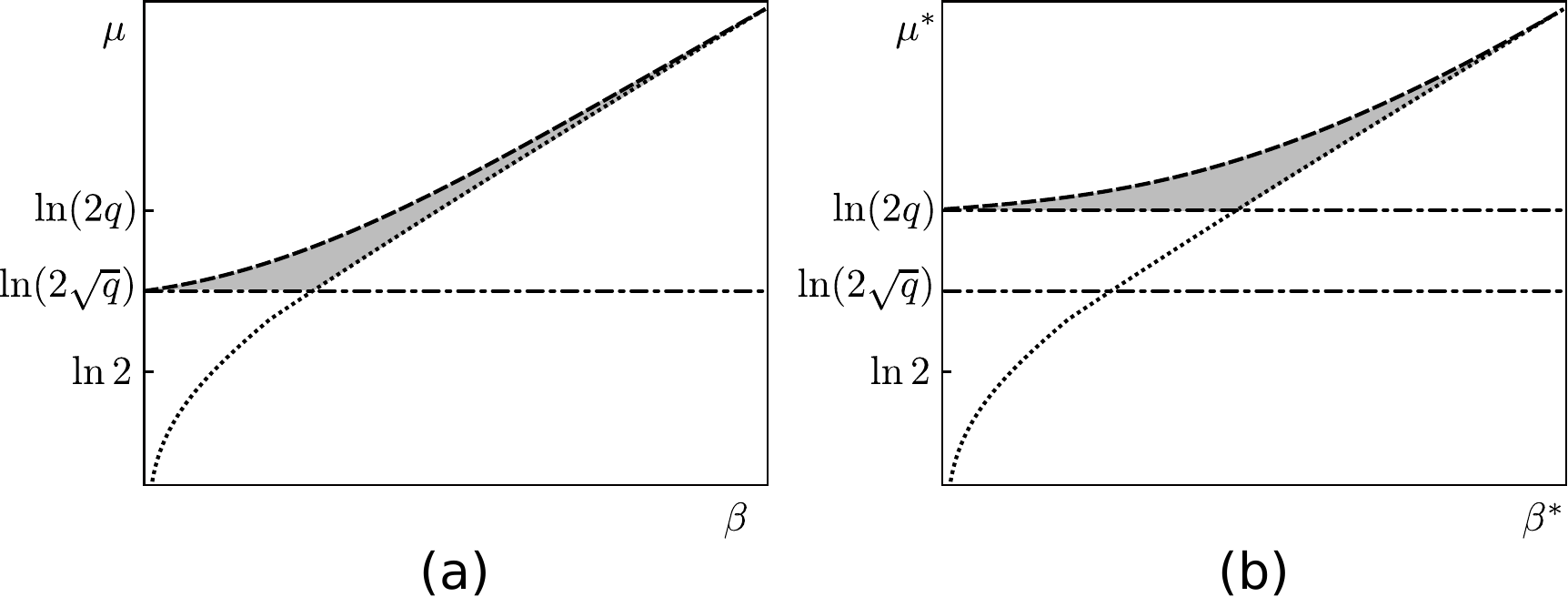}
\end{center}
\caption{Illustrating the region where the critical curve for Potts model coupled CTs and its dual can be located.}
\label{fig5}
\end{figure}

Understanding by {\it critical curve} of the model the boundary of the
domain of parameters $\beta$ and $\mu$ ($\beta^*$ and $\mu^*$ on its dual, respectively) where the model exhibits 
subcritical behavior, this paper makes a rigorous derivation of the subcriticality domain for an $q$-Potts model
coupled to two-dimensional CT and a domain where the tipical infinite-volume Gibbs measure  there no exists. The proof
involve two techniques: the duality relation, Theorem \ref{theo_duality_main1}, and high-temperature expansion for the 
$q$-state Potts model.  In Figure \ref{fig5}, we show the region where the critical curve of the model should 
be located (gray region), and this figure show that critical curve for the model is asymptotic to $\frac{3}{2}\beta+\ln2$.\\

\noindent Define the sets
$$\begin{array}{ccl}
 {\bf \Sigma} &=& \left\{(\beta,\mu)\in\mathbb{R}^2_{+} : \mu <  \max\left\{ \ln(2\sqrt{q}), 
\displaystyle\frac{3}{2}\ln\left(\mbox{e}^{\beta} -1\right) + \ln2\right \}  \right\},
\end{array}$$ 
and 
$$\begin{array}{ccl}
 {\bf \Sigma^*} &=& \left\{(\beta^*,\mu^*)\in\mathbb{R}^2_{+} : \mu^* <  \max\left\{ \ln(2q), 
\displaystyle\frac{3}{2}\ln\left(\mbox{e}^{\beta^*} -1\right) + \ln2\right \}  \right\}.
\end{array}$$ 
We prove the following theorem for existence and no existence of Gibbs measure for the 
model.
\begin{theo}\label{theo_bounds_main2}
Let $q\geq 2$. 
\begin{enumerate}
\item[$(a)$] {\it Potts model coupled to CTs.}  If $(\beta,\mu)\in {\bf \Sigma}$ then there exists $N_0\in\N$ such that  the partition function $\Xi_N(\beta,\mu)=+\infty$ whenever 
$N > N_0$. Moreover, the Gibbs distribution  $\mathbb{P}^{\beta,\mu}_N$ with periodic boundary conditions cannot be defined 
by using the standard formula with  $\Xi_N(\beta,\mu)$ as a normalising denominator, consequently, there is 
no limiting probability measure $\mathbb{P}^{\beta,\mu}$ as $N\to\infty$.  Furthermore, if $(\beta,\mu)$ satisfied 
\begin{equation}
\mu > \displaystyle\frac{3}{2} \ln\left( q +e^\beta -1\right) + \ln 2
-\ln q +\displaystyle\frac{3}{2}\ln\left(1 + (q^{2/3} -1)\displaystyle\frac{e^\beta -1}{q+e^\beta -1} \right), 
\end{equation} 
the infinite-volume free energy exists, i.e. the following limit there exists:
$$ \lim_{N\to\infty} \displaystyle\frac{1}{N}\ln \Xi_N(\beta,\mu).$$
Moreover, as $N\to \infty$, the Gibbs distribution $\mathbb{P}_N^{\beta,\mu}$ converges weakly to a limiting
probability distribution $\mathbb{P}^{\beta,\mu}$.
\item[$(b)$] {\it Potts model coupled to dual CTs.} If $(\beta^*,\mu^*)\in {\bf \Sigma^*}$ then 
we have the same conclusion for the the Gibbs distribution  $\mathbb{P}^{\beta^*,\mu^*}_N$, i.e. there is 
no limiting probability measure $\mathbb{P}^{\beta^*,\mu^*}$ as $N\to\infty$.
Furthermore, if $(\beta^*,\mu^*)$ satisfied 
\begin{equation}
\mu^* > \displaystyle\frac{3}{2}\beta^* + \ln 2
+\displaystyle\frac{3}{2}\ln\left(1 + \displaystyle\frac{q^{2/3} -1}{e^{\beta^*}} \right), 
\end{equation} 
the infinite-volume free energy exists and, as $N\to \infty$, the Gibbs distribution $\mathbb{P}_N^{\beta^*,\mu^*}$ 
converges weakly to a limiting probability distribution $\mathbb{P}^{\beta^*,\mu^*}$.
\end{enumerate}
\end{theo}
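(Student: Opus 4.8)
The plan is to sandwich $\Xi_N(\beta,\mu)$ between the pure-CT partition function $Z_N$ evaluated at a shifted cosmological constant and then invoke Properties 1 and 2. The tool for the Potts factor is the Fortuin--Kasteleyn / high-temperature expansion
\[
Z_P(\beta,q,\bft)=\sum_{A\subseteq E(\bft)}(e^\beta-1)^{|A|}\,q^{c(A)},
\]
where $c(A)$ is the number of connected components of the spanning subgraph $(V(\bft),A)$. Since each $\bft$ is a triangulation of the torus, Euler's relation gives $|V(\bft)|=n(\bft)/2$ and $|E(\bft)|=3n(\bft)/2$, so every exponent below is linear in $n(\bft)$ and can be folded into the cosmological constant. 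First I would simplify the two stated thresholds: a short manipulation collapses the right-hand side in part $(a)$ to $\tfrac32\ln(q^{1/3}+e^\beta-1)+\ln2$ and the one in part $(b)$ to $\tfrac32\ln(e^{\beta^*}+q^{2/3}-1)+\ln2$, and under the map (\ref{duality_relation}) the former is carried exactly onto the latter. This already signals that $(b)$ should follow from $(a)$ via Theorem \ref{theo_duality_main1}.

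For the divergence (non-existence) statement I would keep only two terms of the expansion. The term $A=\emptyset$ gives $Z_P\ge q^{|V(\bft)|}=q^{n(\bft)/2}$, whence $\Xi_N(\beta,\mu)\ge Z_N(\mu-\tfrac12\ln q)$; the term $A=E(\bft)$, together with connectedness of the triangulation ($c(E(\bft))=1$), gives $Z_P\ge q\,(e^\beta-1)^{3n(\bft)/2}$, whence $\Xi_N(\beta,\mu)\ge q\,Z_N(\mu-\tfrac32\ln(e^\beta-1))$. By Property 2 each lower bound is $+\infty$ for all large $N$ as soon as the shifted constant drops below $\ln2$, i.e. when $\mu<\ln(2\sqrt q)$ or $\mu<\tfrac32\ln(e^\beta-1)+\ln2$; their union is exactly ${\bf \Sigma}$. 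Divergence of $\Xi_N$ then blocks normalisation of $\mathbb P^{\beta,\mu}_N$ and rules out a weak limit.

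The substantive half is the existence statement, which needs a matching upper bound $Z_P(\beta,q,\bft)\le C\,(q^{1/3}+e^\beta-1)^{3n(\bft)/2}$ with $C$ independent of $\bft$. Writing $(q^{1/3}+e^\beta-1)^{|E(\bft)|}=\sum_A(e^\beta-1)^{|A|}q^{(|E(\bft)|-|A|)/3}$ and comparing term by term, this reduces to the uniform combinatorial inequality
\[
c(A)\le\tfrac13\bigl(|E(\bft)|-|A|\bigr)+O(1),\qquad\text{equivalently}\qquad b_1(A)\le\tfrac23|A|+O(1),
\]
for the first Betti number $b_1(A)$ of $(V(\bft),A)$. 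Here the geometry enters: since every subgraph of a torus triangulation is itself torus-embeddable, Euler's inequality yields $|A|\le 3\,\mathrm{rank}(A)+O(1)$ uniformly, the $O(1)$ accounting for the at most two homologically non-trivial cycles of the torus, and this is precisely the bound required, with a universal constant. With it in hand $\Xi_N(\beta,\mu)\le C\,Z_N\bigl(\mu-\tfrac32\ln(q^{1/3}+e^\beta-1)\bigr)$, and Property 1 produces the infinite-volume free energy exactly on the stated domain. I expect this uniform sparsity inequality — securing the factor $q^{1/3}$ rather than the easy $q$, with a constant that does not grow with the triangulation — to be the main obstacle.

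Finally, for weak convergence of $\mathbb P^{\beta,\mu}_N$ I would follow the transfer-operator route of \cite{MYZ2001,HeAnYuZo:2013}: the upper bound guarantees that the transfer operator of the coupled model (acting on the slice length together with the boundary spin configuration) is bounded with a spectral gap once $\mu$ exceeds the threshold, so a Krein--Rutman/Perron--Frobenius argument delivers both $\lim_N\tfrac1N\ln\Xi_N$ and the weak limit of the joint measure. Part $(b)$ I would not reprove from scratch: the threshold identifications above show that $(\beta^*,\mu^*)\in{\bf \Sigma^*}$ is equivalent to $(\beta,\mu)\in{\bf \Sigma}$ and that the dual existence domain is the image of the primal one, so Theorem \ref{theo_duality_main1} transports both conclusions to the dual. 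The cubic structure of $\bft^*$ (degree $3$, so $|A^*|\le\tfrac32\,\mathrm{rank}(A^*)+O(1)$) would give the same inequality with $q^{1/3}$ replaced by $q^{2/3}$, should a direct argument be preferred.
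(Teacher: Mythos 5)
Your proposal follows essentially the same route as the paper's own proof: the two one-term lower bounds on $Z_P$ (the $A=\emptyset$ and $A=E(\bft)$ terms of the FK expansion) combined with Property 2 give the non-existence statement on ${\bf \Sigma}$, while your term-by-term upper bound, which reduces to the cycle-rank inequality $b_1(A)\le \tfrac{2}{3}|A|+O(1)$, is exactly the paper's high-temperature estimate $\xi(e_1,\dots,e_k)\le \tfrac{2}{3}(k+1)$, folded into a shifted cosmological constant and combined with Property 1 and the duality relation to obtain parts $(a)$ and $(b)$. Your simplified thresholds $\tfrac{3}{2}\ln\bigl(q^{1/3}+e^\beta-1\bigr)+\ln 2$ and $\tfrac{3}{2}\ln\bigl(e^{\beta^*}+q^{2/3}-1\bigr)+\ln 2$ agree algebraically with the paper's expressions and are mapped onto each other by the duality map, just as in the paper's treatment of part $(b)$.
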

As a byproduct, the Theorem \ref{theo_bounds_main2} serves 
to find lower and upper bounds  for the infinite-volume free energy. Moreover, in the case of 
$2$-state Potts model (Ising model), Theorem \ref{theo_bounds_main2} extends earlier results from
\cite{cerda}, \cite{HeAnYuZo:2013} and improves the numerical approximation of the curve in high temperature given 
in \cite{Ambjorn:1999gi}. In aditional, this approach allows to get a better aproximation of the critical curve and 
check the asymptotic behavior of the critical curve given  in \cite{Ambjorn:1999gi}, and it say that 
critical curve is  asymptotic to $\frac{3}{2}\beta + \ln2$. Furthermore, we show that the behavior for 
the matter of the free energy density in the gravitational ensemble in the thermodynamic limit, suppose in \cite{Ambjorn:1999gi} for 
the numerical simulations, is true. 
In Theorem \ref{theo_bounds_main2}, we 
find a lower and an upper curve that converges fast to $\frac{3}{2}\beta + \ln2$.

\section{FK-Potts  model on causal triangulations}\label{FK_Potts model} 
In this section we describe the FK-Potts model on causal triangulations, and in the last subsection we compute 
inequalities which will be used in the proof of the Theorem \ref{theo_duality_main1}.

\subsection{Definition of the FK-Potts model}
Now,  we turn to the FK representation of the $q$-state Potts model.
The random cluster model was originally introduced by Fortuin and Kasteleyn \cite{FK:1972} and it can be 
understood as an alternative representation of  the $q$-state Potts model. 
This representation will be referred to as the FK representation or FK-Potts model. 
We are interested in study FK-Potts model on CTs and dual CTs, and find a duality relation relation between the 
parameters of  the model on CTs and its dual in order to obtain information about the critical curve. 
In \cite{HeAnYuZo:2013}, \cite{cerda}, the model was defined putting spins on any triangle (faces), but it is  
equivalent to put spins on any vertex of the dual graph. In this section we work 
with causal triangulations and  its dual with
periodic boundary conditions, i.e., causal triangulations embedded in a torus $\mathbb{T}$ (see Figure \ref{fig1} (b)).
In general, let $G=(V,E)$ be a graph embedded in $\mathbb{T}$, we obtain its dual graph $G^*=(V^*,E^*)$  as follows: we place 
a dual vertex within each face of $G$. For each $e\in E$ we place a dual $e^*=\langle x^*,y^* \rangle$ joining the two dual 
vertices lying in the two faces of $G$ abutting $e$. Thus, $V^*$ is in one-one correspondence with  the set of faces of $G$, and
$E^*$ is a one-one correspondence with $E$. For each causal triangulation $\bft$, we denote by $\bft^*$ its dual.
\begin{figure}[t!]
\begin{center}
\includegraphics[width=8cm]{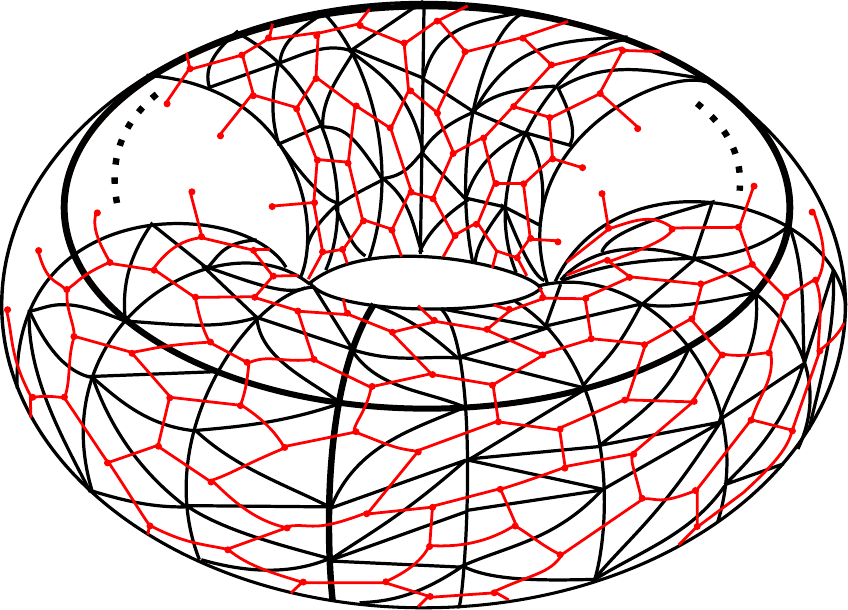}
\end{center}
\caption{ Geometric representation of a dual causal triangulation $\bft^*$
with periodic spatial boundary condition.}
\label{fig2}
\end{figure}
Let $\bft=(V(\bft),E(\bft))$ be a causal triangulation with periodic  boundary condition, where $V(\bft), E(\bft)$
denote the set of vertices and edges, respectively.
The state space for the FK-Potts model is the set   $\Sigma(\bft)=\{0,1\}^{E(\bft)}$, containing configurations that 
allocate $0's$ and $1's$  to the edge 
$e=\{i,j\}\in E(\bft)$. For $w\in\Sigma(\bft)$, we call an edge $e$ open  if $w(e)=1$, and closed  if $w(e)=0$. 
For $w\in \Sigma(\bft)$, let $\eta(w)=\{e\in E(\bft) : w(e)=1\}$ denote the set of open edges. Thus, each 
$w\in\Sigma(\bft)$  splits $V(\bft)$  into the disjoint union of maximal connected 
components, which are called the open clusters of  $\Sigma(\bft)$.
We denote by $k(w)$ the number of connected components (open clusters) of the graph $(V(\bft), \eta(w))$, and 
note that  $k(w)$ includes a count 
of isolated vertices. Two  sites of $\bft$ are said to be connected if one can be reached from another via a chain
of open bonds. 
The partition function of the FK-Potts model on $\bft$ with parameters $p$ and $q$ and periodic boundary condition  is 
defined by 
\begin{equation}
Z_{FK}(p,q,\bft)= \sum_{w\in\Sigma(\bft)}\left( \prod_{e\in E(\bft)}
(1-p)^{1-w(e)} p^{w(e)}\right)q^{k(w)},
\end{equation}
Thus, the FK-Potts measure on $\bft$ is define as follows
\begin{equation}
\Phi^{\bft}_{p,q}(w)=\displaystyle\frac{1}{Z_{FK}(p,q,\bft)}\left( \prod_{e\in E(\bft)}
(1-p)^{1-w(e)} p^{w(e)}\right)q^{k(w)}.
\end{equation}
We will use a similarly notation for the FK-Potts model on dual triangulation $\bft^*$. We 
denote by $Z_{FK}(p^*,q,\bft^*)$ and  $\Phi^{\bft^*}_{p^*,q}$ the partition function and 
the FK-Potts measure on $\bft^*$ with parameters $p^*$ and $q$, respectively.

\subsection{Edwards-Sokal coupling}

There are several ways to make the connection between the Potts  and  FK-Potts model. The correspondence
between the $q$-state Potts model and FK-Potts model was established by Fortuin and Kasteleyn \cite{FK:1972}
(see also \cite{Edwars_Sokal}, \cite{Grimmett:2006}).
In a modern approach, these two models are related via a coupling, i.e., coupled the two systems on 
a common probability space.  This coupling was introduced by Edwards-Sokal in \cite{Edwars_Sokal}.

Let $\bft$ be a CT on the cylinder $C_N$ with periodic boundary condition.
We consider the product sample space $\Omega(\bft)\times \Sigma(\bft)$ where 
$\Omega(\bft)=\{1,2,\dots,q\}^{V(\bft)}$  and $\Sigma(\bft)=\{0,1\}^{E(\bft)}$. 
The Edwards-Sokal measure $\mathcal{Q}$ on $\Omega(\bft)\times \Sigma(\bft)$ is define by 
$$\mathcal{Q}(\sigma,w) \propto \prod_{e=\{i,j\}\in E(\bft)} \left\{ (1-p)\delta_{w(e),0}  
+ p\delta_{w(e),1}\delta_{\sigma_i,\sigma_j} \right\}$$

\begin{theo}[Edwards-Sokal \cite{Edwars_Sokal}]\label{Edwars_Sokal_coupling}
Let $q\in\{2,3,\dots\}$.  Let $p\in (0,1)$ and $\bft$ a CT with periodic boundary condition, and 
suppose that $p=1-e^{-\beta}$. 
If the configuration $w$ is distributed according to an FK-Potts measure with parameters $(p,q)$ on $\bft$, then
$\bfsigma$ is distributed according to a  $q$-state Potts measure with inverse temperature $\beta$. Furthermore, 
the Edwards-Sokal measure provides a coupling  of $\mu^{\bft}_{\beta,q}$ and $\Phi^{\bft}_{p,q}$, i.e.
$$\sum_{w\in\Sigma(\bft)} \mathcal{Q}(\sigma,w) = \mu^{\bft}_{\beta,q}(\sigma),$$
for all $\sigma\in\Omega(\bft)$, and 
$$\sum_{\sigma\in\Omega(\bft)} \mathcal{Q}(\sigma,w) = \Phi^{\bft}_{p,q}(w),$$
for all $w\in\Sigma(\bft)$. Moreover, we have the relation between partition functions
\begin{equation}\label{relat_partition_functions}
Z_{FK}(p,q,\bft) = e^{-\beta |E(\bft)|} Z_P (\beta,q,\bft).
\end{equation}
\end{theo}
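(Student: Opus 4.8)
The plan is to establish all three assertions by directly computing the two marginals of the joint Edwards--Sokal measure, exploiting the fact that the unnormalized weight factorizes over the edges of $\bft$. Write $W(\bfsigma,w)=\prod_{e=\{i,j\}\in E(\bft)}\{(1-p)\delta_{w(e),0}+p\,\delta_{w(e),1}\delta_{\sigma_i,\sigma_j}\}$ for the unnormalized Edwards--Sokal weight, and let $Z=\sum_{\bfsigma,w}W(\bfsigma,w)$ be its total mass, so that $\mathcal{Q}(\bfsigma,w)=W(\bfsigma,w)/Z$. The whole argument reduces to summing $W$ over one coordinate at a time and reading off the result.

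For the $w$-marginal I would fix $w$ and sum over spins. The edge factors split according to whether $e$ is closed or open under $w$: each closed edge contributes the constant $1-p$, while each open edge contributes $p\,\delta_{\sigma_i,\sigma_j}$. Hence $\sum_{\bfsigma}W(\bfsigma,w)=(1-p)^{\#\text{closed}}p^{\#\text{open}}\sum_{\bfsigma}\prod_{e=\{i,j\}:w(e)=1}\delta_{\sigma_i,\sigma_j}$. The product of Kronecker deltas over open edges vanishes unless $\bfsigma$ is constant on every open cluster of $(V(\bft),\eta(w))$, so the remaining sum counts exactly the spin configurations that are constant on each connected component, namely $q^{k(w)}$. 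This yields $\sum_{\bfsigma}W(\bfsigma,w)=(1-p)^{\#\text{closed}}p^{\#\text{open}}q^{k(w)}$, whence $\sum_{\bfsigma}\mathcal{Q}(\bfsigma,w)=\Phi^{\bft}_{p,q}(w)$.

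For the $\bfsigma$-marginal I would fix $\bfsigma$ and sum over edge states, using that $W$ factorizes so the sum passes inside the product and acts on each edge independently. For a single edge $e=\{i,j\}$ the inner sum is $(1-p)+p\,\delta_{\sigma_i,\sigma_j}$, which equals $1$ when $\sigma_i=\sigma_j$ and $1-p$ when $\sigma_i\neq\sigma_j$. Therefore $\sum_w W(\bfsigma,w)=(1-p)^{b(\bfsigma)}$, where $b(\bfsigma)$ denotes the number of edges with unequal endpoints. Substituting $1-p=e^{-\beta}$ and using $b(\bfsigma)=|E(\bft)|-\sum_{\langle i,j\rangle}\delta_{\sigma_i,\sigma_j}=|E(\bft)|+\mathbf{h}(\bfsigma)$, I obtain $\sum_w W(\bfsigma,w)=e^{-\beta|E(\bft)|}e^{-\beta\mathbf{h}(\bfsigma)}$, which is proportional to the Potts weight; dividing by $Z$ gives $\sum_w\mathcal{Q}(\bfsigma,w)=\mu^{\bft}_{\beta,q}(\bfsigma)$. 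The distributional claim then follows because the conditional law of $\bfsigma$ given $w$ under $\mathcal{Q}$ does not depend on which marginal one starts from: if $w\sim\Phi^{\bft}_{p,q}$ then $(\bfsigma,w)\sim\mathcal{Q}$ and its $\bfsigma$-marginal is exactly $\mu^{\bft}_{\beta,q}$.

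Finally, the partition-function identity comes for free by evaluating the total mass $Z$ in the two summation orders: summing the $w$-marginal gives $Z=Z_{FK}(p,q,\bft)$, while summing the $\bfsigma$-marginal gives $Z=e^{-\beta|E(\bft)|}Z_P(\beta,q,\bft)$, and equating the two yields (\ref{relat_partition_functions}). I expect no serious obstacle here: the only step requiring care is the cluster count in the $w$-marginal, where one must argue that the product of deltas over open edges leaves precisely one free spin value per open cluster---and that isolated vertices are counted in $k(w)$, exactly as specified in the definition above---so that the number of admissible spin configurations is $q^{k(w)}$ rather than some other power. Everything else is routine factorization together with the bookkeeping identity relating $b(\bfsigma)$ to the Hamiltonian $\mathbf{h}(\bfsigma)$.
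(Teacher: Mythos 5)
Your proof is correct. Note that the paper itself offers no proof of this statement: it is quoted as a known result of Edwards and Sokal (cited from \cite{Edwars_Sokal}), so there is no internal argument to compare against. What you wrote is the standard coupling argument from the literature, carried out correctly: the $w$-marginal computation (closed edges give $1-p$, open edges force constant spins on clusters, yielding the factor $q^{k(w)}$ with isolated vertices correctly counted), the $\sigma$-marginal computation (edge-wise factorization giving $(1-p)^{b(\bfsigma)}=e^{-\beta|E(\bft)|}e^{-\beta\mathbf{h}(\bfsigma)}$ via $b(\bfsigma)=|E(\bft)|+\mathbf{h}(\bfsigma)$), and the partition-function identity $Z_{FK}(p,q,\bft)=e^{-\beta|E(\bft)|}Z_P(\beta,q,\bft)$ obtained by evaluating the total mass of the unnormalized weight in the two orders of summation. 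The only point worth tightening is the first distributional claim: to match the theorem's phrasing you should state explicitly that the conditional law of $\bfsigma$ given $w$ under $\mathcal{Q}$ assigns independent uniform spins to each open cluster, which is what makes the sentence ``if $w$ is FK-distributed then $\bfsigma$ is Potts-distributed'' meaningful; your marginal identities already contain the substance of this.
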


\subsection{FK-Potts model coupled to CTs with periodic boundary conditions}

In this section we obtain a relation between the partition functions of FK-Potts model on a  triangulation
$\bft$ and its dual. This relation was studied by Beffara and Duminil-Copin for the FK-Potts model on  $\Z^2$ with free, wired and 
periodic boundary condition (see \cite{Beffara_Duminil}). We will view wich the dual of a FK-Potts model defined on a torus is an
almost FK-Potts model, but it is not very different from one. 

Let $\bft$ and $\bft^*$ a CT with periodic boundary condition and its dual.  Each configuration 
$w\in\Sigma(\bft)=\{0,1\}^{E(\bft)}$ gives rise to a dual configuration $w^*\in\Sigma(\bft^*)=\{0,1\}^{E(\bft^*)}$
given by $w^*(e^*)=1-w(e)$. That is, $e^*$ is declared open if and only if the corresponding bond  $e$ is closed.  
The new configuration $w^*$ is called the dual configuration of $w$, and note that there exists an one-one correspondence 
between $\Sigma(\bft)$ and $\Sigma(\bft^*)$.
As in the Section \ref{FK_Potts model},
to each configuration $w^*$ there corresponds  the set $\eta(w^*)=\{e^*\in E(\bft^*) : w^*(e^*)=1\}$ of its open edges.

Now, we start of one FK-Potts model on $\bft$ with parameters $p$ and $q$, and  we will obtain two FK-Potts models
on the dual triangulation $\bft^*$ with parameters $p^*$ and $q$ that limited the initial model.

Let  $o(w)$ (resp. $c(w)$) denote the number of open  edges (resp. closed) of $w$, $k(w)$ the number of connected components
of $w$, and  $f(w)$ the number of faces delimited by $w$, i.e. the number  of connected components of the complement 
of the set of open bonds. Now, We will define an additional parameters $\delta(w)$ which is associated with the topology 
of the surface where the graph was embedded.
Call a connected component of $w$ a {\it net} if it contains two non-contractible simple loops $\gamma_1, \gamma_2$ of 
different homotopy  classes, 
and a {\it cycle}  if it contain a non-contractible simple loops $\gamma_1$  but is not 
a net. These definitions were introduced in \cite{Beffara_Duminil}. In aditional,
notice that every configuration $w$ can be of one three types: 

\begin{itemize}
 \item One of the cluster of $w$ is a net. 
 In that case, we let $\delta(w)=2$;
 \item One of the cluster of $w$ is a cycle.  
 We then let $\delta(w)=1$;
 \item None of the cluster of $w$ is a net or a cycle. We let $\delta(w)=0$.
\end{itemize}
Utilizing the previous definition of the parameter $\delta$, we obtained the following version of Euler's formula.
\begin{prop}[Euler's formula]\label{Euler}
Let $\bft$ a CT with periodic boundary condition and $w\in \{0,1\}^{E(\bft)}$. Then 
\begin{equation}\label{euler_formula}
|V(\bft)|- o(w) + f(w) = k(w) + 1 - \delta(w). 
\end{equation}
\end{prop}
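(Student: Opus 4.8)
The plan is to read (\ref{euler_formula}) as a statement about the embedded $1$-complex $H=(V(\bft),\eta(w))$ — all vertices of $\bft$ but only the open edges — drawn on the torus $\mathbb{T}$, and to prove it by induction on the number $o(w)$ of open edges, switching edges from closed to open one at a time starting from the all-closed configuration. The base case $o(w)=0$ is immediate: then $\eta(w)=\emptyset$, so $k(w)=|V(\bft)|$, $f(w)=1$ (the whole torus is one region), $\delta(w)=0$, and both sides of (\ref{euler_formula}) equal $|V(\bft)|+1$. For the inductive step I open one further edge $e$, which is drawn as an arc inside a single face $F$ of the current configuration with both endpoints on $\partial F$; opening $e$ raises $o$ by $1$, and I split into three mutually exclusive and exhaustive cases: (a) $e$ joins two distinct open clusters, so $k$ drops by $1$ while $F$ is not cut and no loop is created, leaving $f$ and $\delta$ unchanged; (b) $e$ joins one cluster to itself and separates $F$ into two pieces, so $k$ is unchanged and $f$ rises by $1$; (c) $e$ joins one cluster to itself but does not separate $F$, so $k$ and $f$ are both unchanged. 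In case (a) both sides drop by $1$; in case (b) both sides are unchanged provided $\delta$ does not move; in case (c) both sides drop by $1$ provided $\delta$ increases by exactly $1$. Hence the entire content of the proof is the control of $\delta$ in cases (b) and (c).

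This control rests on three standard facts about the torus: a single non-contractible simple closed curve is non-separating and has annular complement; two disjoint non-contractible simple closed curves are always freely homotopic, so disjoint loops of different homotopy classes cannot coexist; and the complement of a net — a connected set carrying two non-contractible loops of different classes — is a disjoint union of disks. Using these I will show that $e$ fails to separate its face $F$ (case (c)) exactly when opening it creates, inside the cluster of its endpoints, a non-contractible loop of a homotopy class not yet represented there: either the very first non-contractible loop of the configuration, which turns its cluster into a cycle and raises $\delta$ from $0$ to $1$; or a loop transverse to a non-contractible loop already present in the same cluster, which turns that cluster into a net and raises $\delta$ from $1$ to $2$. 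In all remaining situations the loop closed by $e$ is either contractible or parallel to an already-present non-contractible loop, the first fact forces $e$ to separate $F$ (case (b)), and $\delta$ is unchanged. Finally, once $\delta=2$ a net is present, so by the third fact every face is a disk and every further opened arc separates its face; hence case (c) can occur at most twice along the process and $\delta$ stays in $\{0,1,2\}$, exactly matching its three-valued definition.

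The main obstacle I anticipate is making the (b)-versus-(c) dichotomy fully rigorous — establishing that ``whether the arc $e$ separates its face'' is governed by the homotopy type of the loop $e$ closes together with the current value of $\delta$, and in particular that $\delta$ can rise by at most $1$ per opened edge. This is where the low genus of the torus is decisive: since $H_1(\mathbb{T})\cong\Z^2$ has rank $2$, there are at most two independent non-separating arc-additions, which both caps $\delta$ at $2$ and reproduces the cycle/net dichotomy underlying its definition. An alternative, non-inductive route would invoke additivity of the compactly supported Euler characteristic, $0=\chi(\mathbb{T})=\chi(H)+\sum_{F}\chi_c(F)$ with $\chi(H)=|V(\bft)|-o(w)$, and then read off the genus and boundary-circle data of each face; but since $\chi_c$ of a face equals $2-2g_F-b_F$ and I would still have to decide which faces are disks, annuli, or carry a handle, this collapses to the same three topological facts, so I would ultimately favor the cleaner edge-by-edge induction.
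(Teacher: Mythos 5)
The paper offers no proof of Proposition \ref{Euler} for you to be compared against: it is stated as known, with the notions of net, cycle and $\delta$ imported from \cite{Beffara_Duminil}, where the toral Euler formula is established by essentially the edge-by-edge induction you propose. So your strategy is the standard and correct one, your base case is right, and the arithmetic of your cases (a), (b), (c) is right; consequently the \emph{entire} content of the proposition is concentrated in the three topological claims you make: a bridge neither splits a face nor changes $\delta$, and a same-cluster chord fails to separate its face exactly when $\delta$ increases by $1$.

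Those claims are where the proposal has genuine gaps, and you flag the main one yourself: the (b)-versus-(c) dichotomy is announced (``I will show that\dots'') but never established, and it is not routine. First, ``the loop closed by $e$'' is not well defined: different circuits through $e$ (namely $e$ completed by different paths inside its cluster) may lie in different homotopy classes, so any criterion must quantify over all of them. A workable one is: $e$ separates its face if and only if \emph{some} circuit through $e$ has mod-$2$ homology class lying in the mod-$2$ span of the classes already realized by the open subgraph; one direction comes from the fact that the frontier of a complementary component is an even, null-homologous subgraph containing $e$, the other from closing a path joining the two sides of $e$ into a curve $\sigma$ disjoint from the configuration, crossing $e$ once, and pairing $\sigma$ against circuits. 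Second, matching this criterion to the behaviour of $\delta$ forces you to exclude configurations your sketch never considers, because mod-$2$ data do not determine integral classes: e.g.\ a cluster all of whose circuits have class $0$ or $\pm(1,0)$ which acquires, through $e$, a circuit of class $(1,2)$ --- equal to $(1,0)$ mod $2$, hence ``separating'' by the criterion, yet apparently creating a net and so breaking the count. One must prove this cannot happen (the sub-paths between consecutive crossings of the new circuit with the existing $(1,0)$-circuit already force a circuit of class $(\cdot\,,\pm1)$ inside the old cluster, so it was a net beforehand); the remark that $H_1(\mathbb{T})$ has rank $2$ does none of this work. Third, your ``fact 3'' is false as stated: the complement of a net need not be a union of disks, since a net may carry contractible circuits whose complementary regions include annuli. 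What is true, and what the $\delta=2$ step needs, is that every complementary component is \emph{contained in} an open disk; then a curve witnessing non-separation would be contractible yet cross a circuit through $e$ exactly once, which is impossible. Finally, a small inaccuracy in case (a): merging two clusters certainly does create new loops; what your fact 2 actually yields is that no loop in a homotopy class other than the already-present parallel one can appear, so $\delta$ is unchanged. Until the dichotomy is proved and these repairs made, the induction is correct bookkeeping wrapped around an unproven core.
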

\noindent Employing duality and Proposition \ref{Euler}, we have the following relations 
\begin{equation}\label{consequences_Euler}
o(w)+o(w^*)=|E(\bft)|, \quad f(w)=k(w^*)\;\; \mbox{and}\;\; \delta(w)+\delta(w^*)=2. 
\end{equation}
Let $q\in [2,\infty)$ and $p\in (0,1)$. The partition function for the  FK-Potts model is given by 
$$
\begin{array}{ccl}
Z_{FK}(p,q,\bft) &=& \displaystyle\sum_{w\in\Sigma(\bft)}\left( \prod_{e\in E(\bft)}
(1-p)^{1-w(e)} p^{w(e)}\right)q^{k(w)}\\
  &=& \displaystyle\sum_{w\in\Sigma(\bft)} p^{o(w)}(1-p)^{c(w)}  q^{k(w)}.
\end{array}$$
Employing Euler's formula and relations (\ref{consequences_Euler}), we  rewrite the number of cluster of $w$ in terms of its dual $w^*$
$$k(w)= |V(\bft)|- |E(\bft)| + o(w^*) + k(w^*) + 1 - \delta(w^*).$$
Note also that  $o(w)+o(w^*)=|E(\bft)|=|E(\bft^*)|$. Plugging before relations into the partition function of the  FK-Potts 
model, we obtain
$$
\begin{array}{ccl}
Z_{FK}(p,q,\bft) &=&  p^{|E(\bft)|}q^{|V(\bft)|- |E(\bft)|} \displaystyle\sum_{w\in\Sigma(\bft)}\left( \frac{1-p}{p} \right)^{o(w^*)} q^{o(w^*) + k(w^*) + 1 - \delta(w^*)}\\  
\end{array}$$
As there exists an one-one correspondence between $\Sigma(\bft)$ and $\Sigma(\bft^*)$, in the last equality we 
we change the sum over  $\Sigma(\bft)$ by the   sum  over $\Sigma(\bft^*)$. Thus, we obtain the following representation of the 
partition function in terms of configurations into $\Sigma(\bft^*)$
\begin{equation}\label{fkk}
Z_{FK}(p,q,\bft) = p^{|E(\bft)|}q^{|V(\bft)|- |E(\bft)|} \!\!\!\displaystyle\sum_{w^*\in\Sigma(\bft^*)}
\left( \frac{q(1-p)}{p} \right)^{o(w^*)}\!\! q^{k(w^*) + 1 - \delta(w^*)}
\end{equation}
Using the  relation (\ref{fkk}),   we obtain the following lemma.
\begin{lm}\label{duality}
Let $\bft$ be a CT with periodic boundary condition. Then the following comparison inequalities both
\begin{equation}
Z_{FK}(p,q,\bft)\leq \left(\frac{p}{1-p^*} \right)^{|E(\bft)|}q^{|V(\bft)|- |E(\bft)| +1} Z_{FK}(p^*,q,\bft^*) 
\end{equation}
and 
\begin{equation}
 \left(\frac{p}{1-p^*} \right)^{|E(\bft)|}q^{|V(\bft)|-|E(\bft)|-1} Z_{FK}(p^*,q,\bft^*)  \leq Z_{FK}(p,q,\bft)
\end{equation}
where  $Z_{FK}(p^*,q,\bft^*)$ is the partition function for FK-Potts model on $\bft^*$ with parameters $q$ and 
$p^*=p^*(p,q)$ satisfying
$$p^*(p,q)=\frac{(1-p)q}{(1-p)q + p},\;\;\mbox{or equivalently}\;\; \frac{p^*}{1-p^*}\cdot \frac{p}{1-p} =q.$$
\end{lm}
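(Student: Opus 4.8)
The plan is to read both inequalities directly off the rewritten partition function (\ref{fkk}), so that essentially all of the combinatorial and topological input (Euler's formula, Proposition \ref{Euler}, and the consequences (\ref{consequences_Euler})) has already been spent. The one genuinely new observation is that, on the torus, the dual of an FK configuration fails to be an honest FK configuration only through the topological correction $\delta$, and that this correction can be controlled uniformly because $\delta(w^*)$ takes only the values $0,1,2$.

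First I would insert the stated value of $p^*(p,q)$, equivalently the relation $\frac{p^*}{1-p^*}=\frac{q(1-p)}{p}$, into (\ref{fkk}), and pull the factor $q$ out of the exponent $k(w^*)+1-\delta(w^*)$. This recasts (\ref{fkk}) as
\[
Z_{FK}(p,q,\bft)=p^{|E(\bft)|}q^{|V(\bft)|-|E(\bft)|+1}\sum_{w^*\in\Sigma(\bft^*)}\left(\frac{p^*}{1-p^*}\right)^{o(w^*)}q^{k(w^*)}\,q^{-\delta(w^*)}.
\]
Next I would write the dual partition function in the same normalisation: using $o(w^*)+c(w^*)=|E(\bft^*)|=|E(\bft)|$ and factoring out $(1-p^*)^{|E(\bft)|}$ gives
\[
Z_{FK}(p^*,q,\bft^*)=(1-p^*)^{|E(\bft)|}\sum_{w^*\in\Sigma(\bft^*)}\left(\frac{p^*}{1-p^*}\right)^{o(w^*)}q^{k(w^*)}.
\]
The two sums over $\Sigma(\bft^*)$ are now manifestly identical except for the extra weight $q^{-\delta(w^*)}$ carried by the first.

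Finally, since every summand is nonnegative and $q\geq 2$ forces $q^{-2}\leq q^{-\delta(w^*)}\leq 1$ termwise (because $\delta(w^*)\in\{0,1,2\}$), the weighted sum is squeezed between $q^{-2}$ and $1$ times the plain sum. Replacing the plain sum by $(1-p^*)^{-|E(\bft)|}Z_{FK}(p^*,q,\bft^*)$ and collecting the prefactor $p^{|E(\bft)|}(1-p^*)^{-|E(\bft)|}=\left(\frac{p}{1-p^*}\right)^{|E(\bft)|}$, the termwise bound $\leq 1$ produces the upper inequality (with $q$-exponent $|V(\bft)|-|E(\bft)|+1$) and the termwise bound $\geq q^{-2}$ produces the lower inequality (shifting the exponent to $|V(\bft)|-|E(\bft)|-1$), matching the two displayed estimates exactly.

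I expect no serious obstacle here, as the heavy lifting was already done in deriving (\ref{fkk}). The only point requiring care is the uniform two-sided bound on $q^{-\delta(w^*)}$, which is precisely why the conclusion is a pair of inequalities rather than an identity: on a planar (genus-zero) graph one would have $\delta\equiv 0$ and would recover the classical self-dual FK relation, whereas the toroidal boundary condition admits non-contractible loops that make $\delta$ positive and thereby obstruct an exact equality.
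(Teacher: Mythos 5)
Your proposal is correct and follows essentially the same route as the paper: both substitute the relation $\frac{p^*}{1-p^*}=\frac{q(1-p)}{p}$ into (\ref{fkk}) and then bound the topological factor termwise, the paper via $-1\leq 1-\delta(w^*)\leq 1$ (so $q^{-1}\leq q^{1-\delta(w^*)}\leq q$) and you, after pulling out one power of $q$, via $q^{-2}\leq q^{-\delta(w^*)}\leq 1$, which is the identical estimate in a different normalisation. Your explicit rewriting of $Z_{FK}(p^*,q,\bft^*)$ with the factor $(1-p^*)^{|E(\bft)|}$ extracted merely makes visible the step the paper performs implicitly, so there is nothing substantively different.
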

\begin{proof}
We  introduce the parameter $p^*=p^*(p,q)$ as solution of the equation 
$$\displaystyle\frac{p^*}{1-p^*} =\frac{(1-p)q}{p}.$$ 
and it is plugging in equation (\ref{fkk}). Thus, the partition function can be written in the following ways
$$
\begin{array}{ccl}
Z_{FK}(p,q,\bft) &=& \displaystyle\frac{p^{|E(\bft)|}}{(1-p^*)^{|E(\bft^*)|}}q^{|V(\bft)|- |E(\bft)|} 
      \displaystyle\sum_{w^*\in\Sigma(\bft^*)} (p^*)^{o(w^*)}(1-p^*)^{c(w^*)} q^{k(w^*) + 1 - \delta(w^*)}.
\end{array}$$
Notice that $-1\leq 1-\delta(w^*)\leq 1$, for any $w^*\in\Sigma(\bft^*)$. We denote by $Z_{FK}(p^*,q,\bft^*)$, the partition 
function of a FK-Potts model with parameters $p^*$ and $q$. Thus, we obtain the upper bound 
$$
\begin{array}{ccl}
Z_{FK}(p,q,\bft) &\leq&  \displaystyle\frac{p^{|E(\bft)|}}{(1-p^*)^{|E(\bft^*)|}}q^{|V(\bft)|- |E(\bft)| +1} Z_{FK}(p^*,q,\bft^*)
\end{array},$$
and the lower bound
$$
\begin{array}{ccl}
\displaystyle\frac{p^{|E(\bft)|}}{(1-p^*)^{|E(\bft^*)|}}q^{|V(\bft)|- |E(\bft)| -1} Z_{FK}(p^*,q,\bft^*) &\leq& Z_{FK}(p,q,\bft) 
\end{array}$$
for the partition function of FK-Potts model on $\bft$ with parameters $p$ and $q$. 
Using the one-one correspondence between $E(\bft)$ and $E(\bft^*)$, we conclude the proof.
\end{proof}

The partition function for pure CT's has been determined as a sum over all possible triangulations of a cylinder where
each configuration is weighted by a Boltzmann factor $e^{-\mu n(\bft)}$, where $n(\bft)$ standing for the size of the
triangulation and $\mu$ being the cosmological constant. Thus, into two-dimensional quantum gravity the volume $n(\bft)$ 
becomes an important dynamical variable for the model. Therefore, we rewrite inequalities for the 
partition function in Lemma \ref{duality} in terms of the dynamical variable $n(\bft)$.  In the Table \ref{table1} 
we show the simple relation among $V(\bft), E(\bft), V(\bft^*), E(\bft^*)$ and  the number of triangles $n(\bft)$ of  
a CT $\bft$.
\begin{table}[!htpb]
\centering
\begin{footnotesize}
\setlength{\tabcolsep}{15pt} 
\begin{tabular}{|c|c|}
\hline 
$\quad\bft=(V(\bft),E(\bft))\quad$ & $\quad\bft^*=(V(\bft^*),E(\bft^*))\quad$ \\
\hline \hline
$|V(\bft)| = \displaystyle\frac{1}{2}n(\bft)$ &  $|V(\bft^*)| =  n(\bft)$ \\ [1ex]
\hline
$|E(\bft)| = \displaystyle\frac{3}{2}n(\bft)$ & $|E(\bft^*)| = \displaystyle\frac{3}{2}n(\bft)$ \\ [1ex]
\hline
$|\mbox{faces in}\; \bft|=n(\bft)$ &  $|\mbox{faces in}\; \bft^*|=\displaystyle\frac{1}{2}n(\bft)$  \\ [1ex]
\hline
\end{tabular}
\end{footnotesize}
\caption{Relation between the graphs $\bft, \bft^*$ and   $n(\bft)$}
\label{table1}
\end{table}
Employing relations of Table \ref{table1}, the Lemma \ref{duality}  becomes be written  in terms of $n(\bft)$ as follow
\begin{cor}
Let $\bft$ be a CT with periodic boundary condition. Then the
following comparison inequalities both 
\begin{equation}\label{eq2.25}
\left(\frac{p}{1-p^*} \right)^{\frac{3}{2}n(\bft)}q^{-1-n(\bft)}\leq   \frac{Z_{FK}(p,q,\bft)}{ Z_{FK}(p^*,q,\bft^*)}   \leq \left(\frac{p}{1-p^*} \right)^{\frac{3}{2}n(\bft)}q^{1-n(\bft)} 
\end{equation}
and 
\begin{equation}\label{eq2.26}
\left(\frac{p^*}{1-p} \right)^{\frac{3}{2}n(\bft)}q^{-1-\frac{1}{2}n(\bft)}\leq   \frac{Z_{FK}(p^*,q,\bft^*)}{Z_{FK}(p,q,\bft)}   \leq \left(\frac{p^*}{1-p} \right)^{\frac{3}{2}n(\bft)}q^{1-\frac{1}{2}n(\bft)} 
\end{equation}
where parameters $q$ and 
$p^*=p^*(p,q)$ satisfy
$$p^*(p,q)=\frac{(1-p)q}{(1-p)q + p},\;\;\mbox{or equivalently}\;\; \frac{p^*}{1-p^*}\frac{p}{1-p} =q.$$
\end{cor}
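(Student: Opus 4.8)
The plan is to obtain both displays (\ref{eq2.25}) and (\ref{eq2.26}) with no new probabilistic input: (\ref{eq2.25}) will come from substituting the combinatorial identities of Table \ref{table1} directly into the two comparison inequalities of Lemma \ref{duality} and dividing through by $Z_{FK}(p^*,q,\bft^*)$, while (\ref{eq2.26}) will follow from (\ref{eq2.25}) purely algebraically, by taking reciprocals and using the defining relation $\frac{p^*}{1-p^*}\cdot\frac{p}{1-p}=q$ to rewrite the prefactor. The whole argument is bookkeeping, so the only thing to get right is which graph invariant is substituted where.

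First I would read off from Table \ref{table1} the two quantities appearing in Lemma \ref{duality}, namely $|E(\bft)|=\frac{3}{2}n(\bft)$ and $|V(\bft)|-|E(\bft)|=\frac{1}{2}n(\bft)-\frac{3}{2}n(\bft)=-n(\bft)$. Inserting these into the upper inequality of Lemma \ref{duality} and dividing by $Z_{FK}(p^*,q,\bft^*)$ turns it into the right-hand bound of (\ref{eq2.25}), the $q$-exponent collapsing to $1-n(\bft)$; inserting the same two values into the lower inequality and dividing gives the left-hand bound, with $q$-exponent $-1-n(\bft)$. This already establishes (\ref{eq2.25}).

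For (\ref{eq2.26}) I would take reciprocals of (\ref{eq2.25}), which flips the inequalities and produces
$$\left(\frac{1-p^*}{p}\right)^{\frac{3}{2}n(\bft)}q^{\,n(\bft)-1}\leq \frac{Z_{FK}(p^*,q,\bft^*)}{Z_{FK}(p,q,\bft)}\leq \left(\frac{1-p^*}{p}\right)^{\frac{3}{2}n(\bft)}q^{\,n(\bft)+1}.$$
The defining relation $\frac{p^*}{1-p^*}\cdot\frac{p}{1-p}=q$ is equivalent to $\frac{1-p^*}{p}=\frac{1}{q}\cdot\frac{p^*}{1-p}$, so $\left(\frac{1-p^*}{p}\right)^{\frac{3}{2}n(\bft)}=q^{-\frac{3}{2}n(\bft)}\left(\frac{p^*}{1-p}\right)^{\frac{3}{2}n(\bft)}$. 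Substituting this and absorbing the factor $q^{-\frac{3}{2}n(\bft)}q^{\,n(\bft)}=q^{-\frac{1}{2}n(\bft)}$ into the exponent of $q$ converts the two bounds above into exactly $\left(\frac{p^*}{1-p}\right)^{\frac{3}{2}n(\bft)}q^{-\frac{1}{2}n(\bft)\mp 1}$, which is (\ref{eq2.26}).

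I do not expect a genuine obstacle here; the statement is a corollary and the content is entirely the substitution. The one place that demands care is keeping the exponents consistent: one must note that $|E(\bft)|=|E(\bft^*)|=\frac{3}{2}n(\bft)$, so that the common prefactor power $\frac{3}{2}n(\bft)$ is legitimately shared by (\ref{eq2.25}) and (\ref{eq2.26}), while the $q$-exponent shifts from $\pm1-n(\bft)$ to $\pm1-\frac{1}{2}n(\bft)$ precisely because the reciprocal step converts $|V(\bft)|-|E(\bft)|=-n(\bft)$ into the effective exponent $-\frac{1}{2}n(\bft)$ via the factor $q^{-\frac{3}{2}n(\bft)}$ produced by the parameter identity. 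A secondary point worth checking is that the parameter relation is used in the algebraically clean direction $\frac{1-p^*}{p}=\frac1q\frac{p^*}{1-p}$, rather than inverting $p^*(p,q)$ explicitly, which keeps the computation short and transparent.
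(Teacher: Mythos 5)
Your proof is correct and matches the paper's derivation: the paper obtains (\ref{eq2.25}) exactly as you do, by substituting the Table \ref{table1} identities $|E(\bft)|=\tfrac{3}{2}n(\bft)$ and $|V(\bft)|-|E(\bft)|=-n(\bft)$ into the two inequalities of Lemma \ref{duality} and dividing by $Z_{FK}(p^*,q,\bft^*)$. The paper states (\ref{eq2.26}) without a separate argument; your reciprocal-plus-parameter-identity step is a clean way to obtain it, and it has the added merit of never invoking Lemma \ref{duality} on the dual graph $\bft^*$ itself (which is not a CT, so the lemma as stated would not literally apply to it).
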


\section{Proof of Theorem \ref{theo_duality_main1}}\label{Sect3} 

In the previous section we have found comparison inequalities between the partition function of the FK-Potts model 
on $\bft$ and the partition function of the FK-Potts model on its dual. In this section, we employ
results of previous sections in order to prove Theorem \ref{theo_duality_main1}.  
Combining inequalities (\ref{eq2.25}), (\ref{eq2.26}) and the Edwars-Sokal coupling, 
we obtain the following comparison inequalities between the partition function of the $q$-state Potts model 
on $\bft$ and the partition function of the $q$-state Potts model on its dual $\bft^*$.

\begin{equation}\label{eq2.27}
\left(\frac{p}{1-p^*} \right)^{\frac{3}{2}n(\bft)}q^{-1-n(\bft)} e^{\frac{3}{2}(\beta-\beta^*)n(\bft)} \leq   
\frac{Z_{P}(\beta,q,\bft)}{ Z_{P}(\beta^*,q,\bft^*)}   
\leq \left(\frac{p}{1-p^*} \right)^{\frac{3}{2}n(\bft)}q^{1-n(\bft)} e^{\frac{3}{2}(\beta-\beta^*)n(\bft)}
\end{equation}
and 
\begin{equation}\label{eq2.28}
\left(\frac{p^*}{1-p} \right)^{\frac{3}{2}n(\bft)}q^{-1-\frac{1}{2}n(\bft)} e^{\frac{3}{2}(\beta^*-\beta)n(\bft)} \leq   
\frac{Z_{P}(\beta^*,q,\bft^*)}{Z_{P}(\beta,q,\bft)}   \leq 
\left(\frac{p^*}{1-p} \right)^{\frac{3}{2}n(\bft)}q^{1-\frac{1}{2}n(\bft)}e^{\frac{3}{2}(\beta^*-\beta)n(\bft)} 
\end{equation}
where   $(e^\beta - 1 )(e^{\beta^*}-1)=q$. 

Note that inequalities (\ref{eq2.27}) and (\ref{eq2.28}) are satisfied for any $\bft\in \LT_N$  and for any $N\in\N$. Utilizing this
observation we establish the following relation:

Let $\bft$, $\bft^*$ be a infinite causal triangulation and its dual respectively, and denoting by $\psi_G(\beta)$ the free energy
of the Potts model defined on the graph $G=\bft, \bft^*$. Then, we obtain the following result in the thermodynamic limit.
\begin{equation}
\begin{array}{ccl}
\displaystyle\frac{1}{2}\psi_{\bft}(\beta) - \psi_{\bft^*}(\beta^*) &=& \displaystyle\frac{3}{2}\ln\left( 1-e^{-\beta}+qe^{-\beta} \right) - \ln q + 
                 \frac{3}{2}(\beta-\beta^*)\\
                 &=& \displaystyle\frac{3}{2}\ln\left(e^{\beta} -1 \right) - \ln q,
\end{array}
\end{equation}
for any $\bft\in\LT_{\infty}$.

\begin{proof}[Proof of Theorem \ref{theo_duality_main1}.] 
Remember that $p^*=1-e^{-\beta^*}$ and $p=1-e^{-\beta}$. Thus, 
$$\displaystyle\frac{p^*}{1-p}= (1-e^{-\beta^*}) + q e^{-\beta^*}= \frac{q}{(1-e^{-\beta}) + q e^{-\beta}}$$
and 
$$\displaystyle\frac{p}{1-p^*}= \frac{q}{(1-e^{-\beta^*}) + q e^{-\beta^*}}=  (1-e^{-\beta}) + q e^{-\beta}.$$
Multiplying by the Boltzmann factor $e^{-\mu n(\bft)}$ in equations (\ref{eq2.27}) and (\ref{eq2.28}), and  sum over all possible 
CTs of the  cylinder  $C_N$, we obtain the following comparison inequalities between the annealed model 
\begin{equation}\label{eq2.29}
\frac{1}{q} \Xi_N^*(\beta^*,\mu^*) \leq \Xi_N(\beta,\mu) \leq q \Xi_N^*(\beta^*,\mu^*)
\end{equation}
where $\Xi_N$ and $\Xi_N^*$ were defined in (\ref{yamb-pf}) and (\ref{yamb-pf_dual}), and they denote the 
partition function of the $q$-state Potts  model coupled to dual CTs  and its dual, respectively, with parameters
related by 
$$\beta^*= \ln\left( 1+ \frac{q}{e^\beta -1} \right), \quad \mu^* = \mu - \frac{3}{2}\ln(e^\beta -1) + \ln q.$$
Similarly, we have
\begin{equation}\label{eq2.30}
\frac{1}{q} \Xi_N(\beta,\mu) \leq \Xi_N^*( \beta^*,\mu^*) \leq q \Xi_N(\beta,\mu)
\end{equation}
where 
$$\beta= \ln\left( 1+ \frac{q}{e^{\beta^*} -1} \right), \quad \mu= \mu^* - \frac{3}{2}\ln(e^{\beta^*} -1) + \frac{1}{2}\ln q.$$
Taking the natural  logarithm in inequalities (\ref{eq2.29}) and (\ref{eq2.30}), divide both sides of the above inequalities by $N$,
we obtain the inequality
$$\left| \frac{\ln\Xi_N(\beta,\mu) -\ln\Xi_N^*( \beta^*,\mu^*)  }{N} \right| \leq \frac{\ln q}{N}.$$
Let $N\to \infty$ we conclude the proof of Theorem \ref{theo_duality_main1}. 
\end{proof}
An interesting and simple consequence of the  Theorem \ref{theo_duality_main1} is the
asymptotic behavior of the parameter associated with the random geometry in the coupled model.
\begin{cor}\label{asympt_beh}
The following asymptotic behavior for parameters $\mu$ and $\mu^*$ are fulfilled.
\begin{enumerate} 
\item If small $\beta$ and $\beta^*$, we have that 
$$\mu(\beta)\approx \ln2\sqrt{q} +o(\beta^2) \quad, \quad \mu(\beta)\approx \ln2q +o(\beta^2) .$$
\item If large $\beta$ and $\beta^*$, we have that 
$$\mu(\beta)\approx \frac{3}{2}\beta + \ln2\quad, \quad \mu^*(\beta^*)\approx \frac{3}{2}\beta^* + \ln2.$$
\end{enumerate}
\end{cor}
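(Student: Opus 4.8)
The plan is to read off the asymptotics of the critical curve $\mu=\mu(\beta)$ (and $\mu^*=\mu^*(\beta^*)$) from the two explicit curves that trap it. By Theorem \ref{theo_bounds_main2}, the boundary of the subcriticality domain lies above the boundary of $\mathbf{\Sigma}$, namely $\max\{\ln(2\sqrt{q}),\,\frac32\ln(e^\beta-1)+\ln2\}$, and below the existence curve given by the right-hand side of the displayed inequality in part $(a)$; an analogous sandwich holds for the dual with $\mathbf{\Sigma}^*$. Thus it suffices to compute the limits of these two trapping curves as $\beta\to\infty$ and as $\beta\to0$ and invoke a squeeze. Equivalently, and more transparently for the leading terms, I would pass directly to the two degenerate limits of the coupling and reduce $\Xi_N$ to a pure-CT partition function $Z_N(\tilde\mu)$ with a shifted cosmological constant, whose critical value is $\tilde\mu=\ln2$ by Properties 1 and 2.

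For large $\beta$ I would use that the Potts weight is dominated by the $q$ aligned configurations, so that $Z_{P}(\beta,q,\bft)\sim q\,e^{\frac32\beta n(\bft)}$ (recall $|E(\bft)|=\frac32 n(\bft)$ from Table \ref{table1}); summing against $e^{-\mu n(\bft)}$ gives $\Xi_N(\beta,\mu)\sim q\,Z_N\!\left(\mu-\frac32\beta\right)$, which by Properties 1 and 2 has a finite free energy exactly when $\mu-\frac32\beta>\ln2$. This yields the asymptote $\mu(\beta)\approx\frac32\beta+\ln2$. The same conclusion comes from the trapping curves: the $\mathbf{\Sigma}$-boundary satisfies $\frac32\ln(e^\beta-1)+\ln2\to\frac32\beta+\ln2$, while in the existence curve $\frac32\ln(q+e^\beta-1)\to\frac32\beta$ and the factor $\frac{e^\beta-1}{q+e^\beta-1}\to1$ forces $\frac32\ln(1+(q^{2/3}-1))=\frac32\ln q^{2/3}=\ln q$, cancelling the $-\ln q$ and again leaving $\frac32\beta+\ln2$. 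Because $|E(\bft^*)|=\frac32 n(\bft)$ as well, the identical computation on $\bft^*$ gives $\mu^*(\beta^*)\approx\frac32\beta^*+\ln2$.

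For small $\beta$ I would use the decoupling limit $Z_{P}(0,q,\bft)=q^{|V(\bft)|}=q^{n(\bft)/2}$, so that $\Xi_N(0,\mu)=Z_N\!\left(\mu-\frac12\ln q\right)$ is critical at $\mu=\ln2+\frac12\ln q=\ln(2\sqrt{q})$; for the dual, $|V(\bft^*)|=n(\bft)$ gives $q^{n(\bft)}$ and hence the critical value $\ln2+\ln q=\ln(2q)$. A useful cross-check is the duality of Theorem \ref{theo_duality_main1}: since $(e^\beta-1)(e^{\beta^*}-1)=q$, the regime $\beta\to0$ corresponds to $\beta^*\to\infty$, and transporting the already-established asymptote $\mu^*\approx\frac32\beta^*+\ln2$ through the relation $\mu=\mu^*-\frac32\ln(e^{\beta^*}-1)+\frac12\ln q$ reproduces $\ln(2\sqrt{q})$, confirming the value at $\beta=0$.

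The step I expect to be the main obstacle is the sharp error term $o(\beta^2)$ in the small-$\beta$ statements. The two trapping curves already disagree at first order in $\beta$ — the $\mathbf{\Sigma}$-boundary is flat near $\beta=0$, whereas the existence curve has nonzero slope (of order $q^{-1/3}$) there — so a bare squeeze only delivers $\mu(\beta)=\ln(2\sqrt{q})+O(\beta)$. Upgrading this to $o(\beta^2)$ requires a genuine expansion of the annealed free energy about $\beta=0$, showing that the first (and the intended second) order corrections to the effective cosmological constant cancel; this is the only part that is not a routine limit computation, and it is where I would concentrate the effort.
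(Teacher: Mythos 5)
Your reconstruction matches the paper's (implicit) argument: the paper gives no explicit proof of Corollary \ref{asympt_beh} at all --- it is asserted as a ``simple consequence'' of Theorem \ref{theo_duality_main1} --- and the only derivation its results support is exactly your squeeze between the boundary of ${\bf \Sigma}$ (Proposition \ref{lowerbound}) and the existence curve of Theorem \ref{theo_bounds_main2}(a), combined with the duality transport of Theorem \ref{theo_duality_main1}; this is what Figure \ref{fig5} and Tables \ref{table2}--\ref{table3} encode. Your computations check out: for large $\beta$ both trapping curves tend to $\frac32\beta+\ln 2$ (the factor $\frac{e^\beta-1}{q+e^\beta-1}\to 1$ makes the correction term equal $\ln q$, cancelling the $-\ln q$), and at $\beta=0$ the two curves meet exactly at $\ln(2\sqrt q)$ (respectively $\ln(2q)$ for the dual), which pins the value of the critical curve there. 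One misprint you implicitly correct: the second formula in item 1 of the Corollary should read $\mu^*(\beta^*)\approx \ln 2q$, i.e.\ it refers to the dual model.

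The obstacle you flag is genuine, and it is a defect of the paper's claim rather than of your argument. The two trapping curves agree only at $\beta=0$ and separate at first order (slope $0$ versus $\frac32 q^{-1/3}$, as you compute), so the squeeze yields $\mu(\beta)=\ln(2\sqrt q)+O(\beta)$ and nothing finer. Duality cannot rescue the $o(\beta^2)$ refinement either: transporting the dual asymptote $\mu^*(\beta^*)=\frac32\beta^*+\ln 2+\epsilon(\beta^*)$ back through (\ref{duality_relation}) gives
$\mu(\beta)=\ln(2\sqrt q)+\frac32\ln\bigl(1+\frac{e^\beta-1}{q}\bigr)+\epsilon$,
and the middle term is $\sim\frac{3\beta}{2q}$, linear in $\beta$, no matter how fast $\epsilon\to 0$; indeed transporting the dual existence curve in this way just reproduces the primal existence curve, as the paper itself notes at the end of Section \ref{proof_theo2}. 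The paper nowhere supplies the small-$\beta$ expansion of the annealed free energy that a proof of $o(\beta^2)$ would require (its only related remark, in Section \ref{conn_ising}, is a consistency check against the transfer-matrix curve of \cite{HeAnYuZo:2013} for $q=2$, which again only controls first-order behavior). So the honest, provable statement is the one your proposal establishes: the exact values $\ln(2\sqrt q)$, $\ln(2q)$ at $\beta=0$ with error $O(\beta)$, and the exact asymptote $\frac32\beta+\ln 2$ at large $\beta$.
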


\section{Proof of Theorem \ref{theo_bounds_main2}}\label{proof_theo2}

\subsection{Lower bound for the critical curve}
Let $\bft$ be a CT embedded in the torus with height $N$.  We define the set $\Pi_i$
of configurations in $\Sigma(\bft)$ which splits $V(\bft)$ in $i$ maximal connected components, i.e.
$$\Pi_i= \{w\in\Sigma(\bft) :  k(w)=i\}.$$
Similarly, we denote $\Pi_i^*$ the set of configurations in $\Sigma(\bft^*)$ which splits 
$V(\bft^*)$ in $i$ maximal connected components.

Another way of writing $Z_P$ is as the moment generating function of the number
of clusters in a random graph as follow
\begin{equation}\label{Z_1}
Z_P (\beta,q,G)= \exp(\beta |E(G)|) \phi^{G}_{p}\left( q^{k(w)}\right) =\exp(\beta |E(G)|)
\displaystyle\sum_{i\geq 1} q^i \phi^{G}_{p}\left(\Pi_i\right),
\end{equation}
where $|E(G)|$ denotes the number of edges of the graph $G$, and $\phi_p^{G}$ 
denotes product measures on $\Sigma(G)=\{0,1\}^{E(G)}$.
 
Utilizing $G=\bft, \bft^*$, we write the representation (\ref{Z_1}) in terms of the 
dynamical variable $n(\bft)$ of the model. For that, we consider the two cases of interest separately.
\vspace{0.5cm}

{1. \it The model on CTs}: In this case, we can to write the partition function in 
terms of volume $n(\bft)$ of the triangulation as follow
\begin{equation}\label{Z_1_1}
Z_P (\beta,q,\bft) = \exp\left(\displaystyle\frac{3}{2} \beta n(\bft)\right) \phi^{\bft}_{p}\left( q^{k(w)}\right).
\end{equation}
Utilizing representation (\ref{Z_1_1}), we obtain two lower bounds for the partition function  of the Potts model  on $\bft$
\begin{equation}\label{lower}
Z_P (\beta,q,\bft) \geq \max\left\{ q \left(e^\beta -1\right)^{\frac{3}{2}n(\bft)},
 q^{\frac{1}{2}n(\bft)} \right\}.
\end{equation}
These lower bounds for the $q$-state Potts model on $\bft$ permit to obtain a lower barrier for the parameters where 
the annealed model could be defined, and the partition function of the model coupled to CTs could no 
explode in finite volume. 
Apparently, the reader could think that crude bounds employed do not give thermodynamics information of the model, but these
bound are sharp at high and low temperature, as we show in the proof of Theorem \ref{theo_bounds_main2}. Furthermore,
these  lower bounds serves in order to obtain information about the Gibbs measure for $q$-state Potts model coupled to CTs.
\begin{prop}\label{lowerbound}
If $(\beta,\mu)\in\R^+$ such that 
$$\mu<\max\left\{ \ln 2\sqrt{q}, \displaystyle\frac{3}{2}\ln (e^\beta -1)+\ln2 \right\},$$
then there exists $N_0\in\N$ such that the partition function $\Xi_N(\beta,\mu)=\infty$ whenever
$N>N_0$. Moreover, the Gibbs distribution $\mathbb{P}_N^{\beta,\mu}$ cannot be defined
by using the standard formula with $\Xi_N(\beta,\mu)$ as a normalising denominator, consequently, 
there is no limiting probability measure $\mathbb{P}^{\beta,\mu}$ as $N\to\infty$.
\end{prop}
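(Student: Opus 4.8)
The plan is to bound $\Xi_N(\beta,\mu)$ from below by a \emph{pure}-CT partition function $Z_N(\mu')$ with a shifted cosmological constant $\mu'$, and then to invoke Property 2 to force divergence. Starting from the definition (\ref{yamb-pf}) and inserting the lower bound (\ref{lower}) termwise, I would keep only one branch of the maximum at a time: since each branch is individually a lower bound for $Z_P(\beta,q,\bft)$ for \emph{every} $\bft$, summing against the Boltzmann weight $e^{-\mu n(\bft)}$ yields two independent estimates, and it suffices that one of them diverges.

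For the first branch, using $Z_P(\beta,q,\bft)\geq q\,(e^\beta-1)^{\frac{3}{2}n(\bft)}$ and writing $(e^\beta-1)^{\frac{3}{2}n(\bft)}=\exp\{\frac{3}{2}\ln(e^\beta-1)\,n(\bft)\}$, the factor $e^{-\mu n(\bft)}$ merges with this exponential to give
$$\Xi_N(\beta,\mu)\ \geq\ q\sum_{\bft}\exp\left\{-\Big(\mu-\tfrac{3}{2}\ln(e^\beta-1)\Big)\,n(\bft)\right\}\ =\ q\,Z_N\!\left(\mu-\tfrac{3}{2}\ln(e^\beta-1)\right),$$
where $Z_N$ is the pure-CT partition function (\ref{yamb-pf1}). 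The shifted argument lies strictly below $\ln 2$ exactly when $\mu<\frac{3}{2}\ln(e^\beta-1)+\ln 2$. Symmetrically, the second branch $Z_P(\beta,q,\bft)\geq q^{\frac{1}{2}n(\bft)}$ gives
$$\Xi_N(\beta,\mu)\ \geq\ \sum_{\bft}\exp\left\{-\Big(\mu-\tfrac{1}{2}\ln q\Big)\,n(\bft)\right\}\ =\ Z_N\!\left(\mu-\tfrac{1}{2}\ln q\right),$$
whose argument is $<\ln 2$ precisely when $\mu<\ln 2+\frac{1}{2}\ln q=\ln(2\sqrt{q})$.

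The two threshold conditions combine cleanly, which is the point that needs care and is the only real (and modest) obstacle: the hypothesis $\mu<\max\{\ln(2\sqrt{q}),\ \frac{3}{2}\ln(e^\beta-1)+\ln 2\}$ is equivalent to the disjunction ``$\mu<\ln(2\sqrt{q})$ \emph{or} $\mu<\frac{3}{2}\ln(e^\beta-1)+\ln 2$'', so at least one of the two estimates places the shifted cosmological constant $\mu'$ strictly below $\ln 2$. For that estimate, Property 2 (via inequality (\ref{yamb-e14})) supplies $N_0\in\N$ with $Z_N(\mu')=+\infty$ for all $N>N_0$; since $\Xi_N(\beta,\mu)$ dominates a positive multiple of $Z_N(\mu')$, it follows that $\Xi_N(\beta,\mu)=+\infty$ for $N>N_0$ as well.

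Finally, once $\Xi_N(\beta,\mu)=+\infty$ the normalising denominator in the definition (\ref{yamb-Gd}) of $\mathbb{P}^{\beta,\mu}_N$ is infinite, so the standard formula cannot produce a probability measure for $N>N_0$; consequently no weak limit $\mathbb{P}^{\beta,\mu}$ can exist as $N\to\infty$, which closes the proof. The reduction to pure CTs and the appeal to Property 2 are routine; all the content is in matching the two arithmetic shifts of $\mu$ to the two branches of the maximum so that the divergence region coincides exactly with the stated threshold.
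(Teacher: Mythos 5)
Your proposal is correct and takes essentially the same approach as the paper: the paper likewise inserts the two branches of the lower bound (\ref{lower}) into the annealed sum to get $\Xi_N(\beta,\mu)\geq\max\bigl\{q\,Z_N\bigl(\mu-\tfrac{3}{2}\ln(e^\beta-1)\bigr),\,Z_N\bigl(\mu-\tfrac{1}{2}\ln q\bigr)\bigr\}$ and then applies Property 2 (inequality (\ref{yamb-e14})) to force divergence for all large $N$. Your write-up merely makes explicit the disjunction hidden in the maximum, which the paper treats implicitly.
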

\begin{proof}
The lower bound in (\ref{lower}) for $Z_P (\beta,q,\bft)$ provide the following lower bounds for the annealed partition
function $\Xi_N (\beta,\mu)$, 
\begin{equation}\label{lower_anneal}
\Xi_N (\beta,\mu) \geq \max\left\{q Z_N\left(\mu - \frac{3}{2}\ln(e^\beta -1)\right), Z_N\left(\mu - \frac{1}{2}\ln q\right)\right\}.
\end{equation}
Employing the estimation  (\ref{yamb-e14}), we obtain which the partition function $\Xi_N(\beta,\mu)$ there is no
exist if 
$$\mu \leq \frac{1}{2}\ln q +\ln\left(2\cos\frac{\pi}{N+1}\right) \; \mbox{or}\; 
\mu\leq \frac{3}{2}\ln (e^\beta -1)+\ln\left(2\cos\frac{\pi}{N+1}\right).$$
Letting $N\to \infty$ we conclude the proof.
\end{proof}

Proposition \ref{lowerbound} provide a region where the model cannot be defined, and the partition function of the 
$q$-state Potts model coupled to CTs is infinite.
Denote by ${\bf \Sigma}$ points
$(\beta,\mu)$ in $\R_+^2$ with the condition of Proposition \ref{lowerbound}. Then, by the duality relation established
in Theorem \ref{theo_duality_main1} we obtain a region ${\bf \Sigma^*}$ where the model coupled to dual CTs 
also cannot be defined (see Table \ref{table2}). In the next section we will prove that this lower bound is sharp 
at low and high temperature.

\begin{table}[!htpb]
\centering
\begin{footnotesize}
\setlength{\tabcolsep}{15pt} 
\begin{tabular}{|c|c|c|}
\hline 
\mbox{from CTs}\;\; $\bft$  & $\xrightarrow{\mbox{by duality}}$  & \mbox{to dual CTs} \;\; $\bft^*$ \\
\hline \hline
$\mu<\displaystyle\frac{1}{2}\ln q +\ln2$ & $\rightarrow$  & $\mu^* < \displaystyle\frac{3}{2}\ln (e^{\beta^*} -1)+\ln2$ \\ [1ex]
\hline
$\mu < \displaystyle\frac{3}{2}\ln (e^\beta -1)+\ln2$ & $\rightarrow$  &  $\mu^* < \ln q +\ln2$ \\ [1ex]
\hline
\end{tabular}
\end{footnotesize}
\caption{Bounds for the critical curve of the $q$-state Potts model on CTs generated bounds on its dual.}
\label{table2}
\end{table}

{2. \it The model on dual CTs}: Similarly, we  write the partition function in 
terms of volume $n(\bft)$ of the triangulation as follow
\begin{equation}\label{Z_2_2}
Z_P (\beta^*,q,\bft^*) = \exp\left(\displaystyle\frac{3}{2} \beta^* n(\bft)\right) \phi^{\bft^*}_{p}\left( q^{k(w)}\right).
\end{equation}
and utilizing  this representation, we obtain the following lower bounds for the partition function  of the Potts model  on $\bft^*$
\begin{equation}
Z_P (\beta,q,\bft) \geq \max\left\{ q \left(e^{\beta^*} -1\right)^{\frac{3}{2}n(\bft)},
 q^{n(\bft)} \right\}.
\end{equation}
In similar way as in before case, we have the following assertion about non existence of Gibbs 
measures for the model.
\begin{prop}\label{lower_dual} 
If $(\beta^*,\mu^*)\in\R^+$ such that 
$$\mu^*<\max\left\{ \ln 2q,\frac{3}{2}\ln (e^{\beta^*} -1)+\ln2 \right\},$$
then there exists $N_0\in\N$ such that the partition function $\Xi_N^*(\beta^*,\mu^*)=\infty$ whenever
$N>N_0$. Moreover, the Gibbs distribution $\mathbb{P}_N^{\beta^*,\mu^*}$ cannot be defined
by using the standard formula with $\Xi_N^*(\beta^*,\mu^*)$ as a normalising denominator, consequently, 
there is no limiting probability measure $\mathbb{P}^{\beta^*,\mu^*}$ as $N\to\infty$.
\end{prop}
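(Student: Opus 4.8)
The plan is to mirror the argument used for Proposition \ref{lowerbound}, transferring everything to the dual graph $\bft^*$ and paying attention to the fact that $|V(\bft^*)| = n(\bft)$ rather than $\frac{1}{2}n(\bft)$. First I would start from the two lower bounds for $Z_P(\beta^*,q,\bft^*)$ just established, namely
$$Z_P(\beta^*,q,\bft^*) \geq \max\left\{ q\left(e^{\beta^*} - 1\right)^{\frac{3}{2}n(\bft)},\; q^{n(\bft)}\right\},$$
which come from evaluating $\phi^{\bft^*}_{p}\!\left(q^{k(w)}\right)$ at the all-open configuration (where $k(w)=1$ and $o(w)=|E(\bft^*)|=\frac{3}{2}n(\bft)$) and at the all-closed configuration (where $k(w)=|V(\bft^*)|=n(\bft)$), respectively, and then using $e^{\beta^*}p^* = e^{\beta^*}-1$ together with $e^{\beta^*}(1-p^*)=1$.

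Then I would insert these bounds into $\Xi_N^*(\beta^*,\mu^*)=\sum_{\bft} e^{-\mu^* n(\bft)} Z_P(\beta^*,q,\bft^*)$ and recognize each of the resulting sums as a pure-CT partition function with a shifted cosmological constant, obtaining
$$\Xi_N^*(\beta^*,\mu^*) \geq \max\left\{ q\, Z_N\!\left(\mu^* - \frac{3}{2}\ln(e^{\beta^*}-1)\right),\; Z_N\!\left(\mu^* - \ln q\right)\right\}.$$
Next I would invoke Property 2, inequality (\ref{yamb-e14}): for a cosmological constant below $\ln 2$, the pure-CT partition function $Z_N$ is finite only if that constant exceeds $\ln\!\left(2\cos\frac{\pi}{N+1}\right)$. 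Consequently $Z_N$, and hence $\Xi_N^*$, diverges as soon as $\mu^* - \ln q \leq \ln\!\left(2\cos\frac{\pi}{N+1}\right)$ or $\mu^* - \frac{3}{2}\ln(e^{\beta^*}-1) \leq \ln\!\left(2\cos\frac{\pi}{N+1}\right)$.

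Letting $N\to\infty$ so that $\cos\frac{\pi}{N+1}\to 1$, these two conditions become $\mu^* \leq \ln(2q)$ and $\mu^* \leq \frac{3}{2}\ln(e^{\beta^*}-1)+\ln 2$, which together are exactly the hypothesis $\mu^* < \max\{\ln(2q),\, \frac{3}{2}\ln(e^{\beta^*}-1)+\ln 2\}$; this yields the required $N_0$. Finally, since $\Xi_N^*(\beta^*,\mu^*)=+\infty$ for $N>N_0$, the normalising denominator defining $\mathbb{P}_N^{\beta^*,\mu^*}$ breaks down, so no finite-volume Gibbs measure exists for large $N$ and a fortiori there is no weak limit. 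The computation is routine; the only point requiring care, and the sole difference from Proposition \ref{lowerbound}, is the exponent on $q$ in the all-closed bound, which is $n(\bft)$ here (because the dual carries twice as many vertices) instead of $\frac{1}{2}n(\bft)$, and this is precisely what replaces $\ln(2\sqrt{q})$ by $\ln(2q)$ in the statement.
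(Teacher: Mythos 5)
Your proposal is correct and follows exactly the route the paper intends: the paper gives no separate proof of this proposition, stating only that it holds "in similar way" to Proposition \ref{lowerbound}, and your argument is precisely that proof transferred to the dual — the same all-open/all-closed lower bounds, the same reduction to pure-CT partition functions $Z_N$ with shifted cosmological constants, and the same appeal to Property 2 (inequality (\ref{yamb-e14})). You also correctly identify the one substantive change, $|V(\bft^*)|=n(\bft)$ forcing $q^{n(\bft)}$ in place of $q^{\frac{1}{2}n(\bft)}$ and hence $\ln(2q)$ in place of $\ln(2\sqrt{q})$.
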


Utilizing  the duality relation of Theorem \ref{theo_duality_main1}, Proposition \ref{lower_dual} provide 
a region where the Potts model coupled to CTs cannot be defined (see Table \ref{table3}).
\begin{table}[!htpb]
\centering
\begin{footnotesize}
\setlength{\tabcolsep}{15pt} 
\begin{tabular}{|c|c|c|}
\hline 
\mbox{from dual CTs}\;\; $\bft^*$  & $\xrightarrow{\mbox{by duality}}$  & \mbox{to CTs} \;\; $\bft$ \\
\hline \hline
$\mu^* < \ln q +\ln2$ & $\rightarrow$  & $\mu < \displaystyle\frac{3}{2}\ln \left(e^{\beta} -1\right)+\ln2$ \\ [1ex]
\hline
$\mu^* < \displaystyle\frac{3}{2}\ln \left(e^{\beta^*} -1\right)+\ln2$ & $\rightarrow$  &  $\mu < \displaystyle\frac{1}{2}\ln q +\ln2$ \\ [1ex]
\hline
\end{tabular}
\end{footnotesize}
\caption{Bounds for the critical curve of the $q$-state Potts model coupled to dual CTs generated bounds  for the the critical 
curve of the $q$-state Potts model coupled to CTs.}
\label{table3}
\end{table}

Introducing the function $\varphi: \R_+^2 \to \R_+^2$, given by 
\begin{equation}
\varphi(\beta,\mu)=\left(\ln\left(1+\frac{q}{e^\beta -1}\right), 
\mu-\frac{3}{2}\ln\left(e^\beta -1\right)+ \ln q\right),
\end{equation}
and utilizing sets ${\bf \Sigma}$ and ${\bf \Sigma^*}$, defined in Theorem \ref{theo_duality_main1}, we have 
that $\varphi({\bf \Sigma})={\bf \Sigma^*}$.
Employing duality relation, in the next section we compute an upper bound for the critical curve of the Potts model on CTs 
and its dual. In aditional, this approach allows to get behavior of the critical curve for the annealed model for low and 
high temperature.

\subsection{Upper bound for the critical curve}\label{Sect4}
In this section we utilize a High-T expansion for the Potts model introduced by Domb in \cite{Domb}, see also the review
\cite{Baxter}, \cite{Wu}.
 
Let $\bft$ be  a CT with periodic boundary condition.  The partition  function for the Potts 
model on $\bft$ is write in the usual high-T expansion as
\begin{equation}\label{eq2.32}
Z_P(\beta,q,\bft)= \left( \frac{q+h}{q}\right)^{|E(\bft)|} \sum_{\sigma} \prod_{\langle i,j\rangle}(1+ f_{ij})
\end{equation}
where $h=e^\beta -1$ and $f_{ij}=\frac{h}{q+h} (-1 + q\delta_{\sigma_i,\sigma_j})$. It can be readily verified that 
$\sum_{\sigma} f_{ij}=0$ for all $\{i,j\}\in E(\bft)$, consequently, all subgraphs with one or more vertices of degree $1$ 
give rise to zero contributions. Thus,  the partition function can be  written  as follow
$$
\begin{array}{ccl}
Z_P(\beta,q,\bft) &=& \left( \displaystyle\frac{q+h}{q}\right)^{|E(\bft)|} \displaystyle\sum_{\sigma} 
\displaystyle\sum_{A\in G(\bft)} \prod_{\{ i,j\}\in A}f_{ij},
\end{array}$$
where $G(\bft)$ is the set of families of edges of $\bft$ without vertices of degree $1$. Therefore, 
we can rewrite the partition function as 
$$
\begin{array}{ccl}
Z_P(\beta,q,\bft) &=& \left( \displaystyle\frac{q+h}{q}\right)^{|E(\bft)|} 
\displaystyle\sum_{A\in G(\bft)} w(A)   
\end{array}$$
where  $w(A)=\displaystyle\sum_{\sigma}  \prod_{\{ i,j\}\in A}f_{ij}$ is a weight factor associated with 
the subset $A$. We then proceeded to determine $w(A)$. An expression of $w(A)$ for general $A$ can be 
obtained by further expanding in $w(A)$ the product $\displaystyle\sum_{\sigma}  \prod_{\{ i,j\}\in A}f_{ij}$. This 
procedure leads to
$$w(A)=\left( \frac{h}{q+h} \right)^{|A|} \displaystyle\sum_{\sigma}  \mathcal{P}(A)(\sigma),$$
where $\mathcal{P}(A)(\sigma)= \prod_{e\in A} (-1 + q\delta_e(\sigma))$, and if   $e=\{ i,j\}$ then
$\delta_e(\sigma)=\delta_{\sigma_i,\sigma_j}$. Expanding $\mathcal{P}(A)(\sigma)$, we have the following representation
$$
\begin{array}{ccl}
\mathcal{P}(A)(\sigma) &=& (-1)^{|A|}  +(-1)^{|A|-1}q\displaystyle\sum_{e\in A}\delta_e(\sigma) +
        (-1)^{|A|-2}q^2\displaystyle\sum_{e_1,e_2\in A}\delta_{e_1}(\sigma)\delta_{e_2}(\sigma)\\
     & & +\cdots+(-1)q^{|A|-1}\displaystyle\sum_{e_1,\dots,e_{|A|-1}\in A}
         \delta_{e_1}(\sigma)\dots \delta_{e_{|A|-1}}(\sigma)\\
     & &   + q^{|A|}\delta_{e_1}(\sigma)\dots \delta_{e_{|A|}}(\sigma).
\end{array}$$
\begin{figure}[t!]
\begin{center}
\includegraphics[width=10cm]{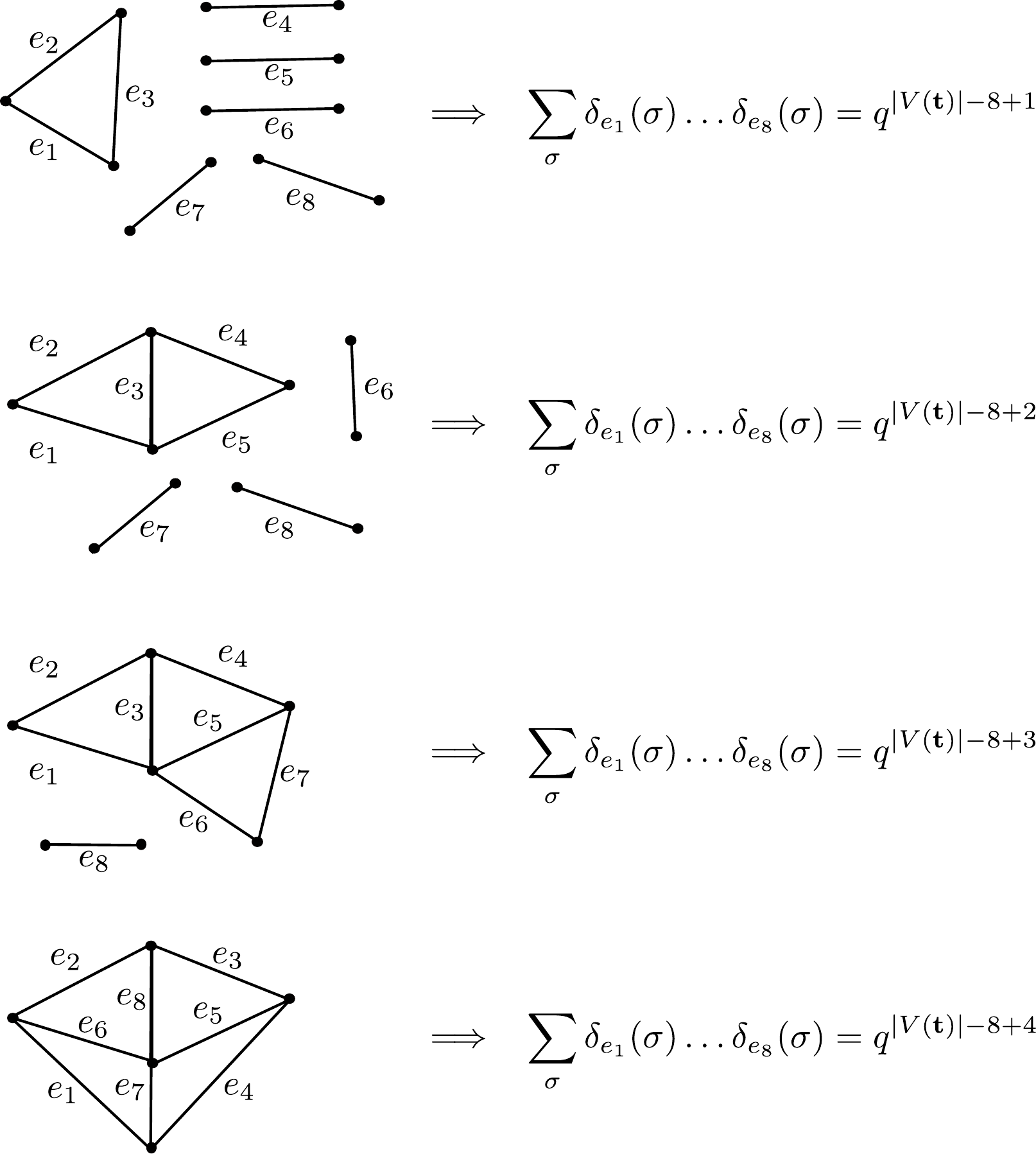}
\end{center}
\caption{Examples of three subgraphs of $A$ with 8 edges. It is clear  that the term $\xi(e_1,\dots,e_8)$ depends of 
the topology of the subgraphs.}
\label{fig4}
\end{figure}

We choose $k$ edges $\{e_1,\dots,e_k\}$ of $A$. These edges form a subgraph of $A$. Thus, we obtain 
$$\sum_{\sigma} \delta_{e_1}(\sigma)\dots\delta_{e_k}(\sigma) = q^{|V(\bft)| - k + \xi(e_1,\dots,e_k)}$$
where $\xi(e_1,\dots,e_k)$ stands the total numbers of  internal faces in each maximal connected 
component of  $\{e_1,\dots,e_k\}$ (number of independent circuits in $\{e_1,\dots,e_k\}$).
Note that 
this terms depends essentially on the topology of  $\{e_1,\dots,e_k\}$ (see Figure \ref{fig4}), but
$\xi(e_1,\dots,e_k)\leq \displaystyle\frac{2}{3}(k+1)$ for all $k$. Thus, we obtain the estimate
$$\sum_{\sigma} \delta_{e_1}(\sigma)\dots\delta_{e_k}(\sigma) \leq  q^{|V(\bft)| - k + \frac{2}{3}(k+1)}=
q^{|V(\bft)| -\frac{k}{3} + \frac{2}{3}}$$
and 
$$\sum_{\sigma} \sum_{e_1,\dots e_k \in A} \delta_{e_1}(\sigma)\dots\delta_{e_k}(\sigma) \leq \binom{|A|}{k}
q^{|V(\bft)| -\frac{k}{3} + \frac{2}{3}}.$$
Therefore, 
$$
\begin{array}{ccl}
\displaystyle\sum_{\sigma}\mathcal{P}(A)(\sigma) &\leq&  q^{|V(\bft)|+ \frac{2}{3}} \displaystyle\sum_{k=0}^{|A|} \binom{|A|}{k} 
(-1)^{|A|-k} (\sqrt[3]{q^2})^k = q^{|V(\bft)|+ \frac{2}{3}}  (\sqrt[3]{q^2} - 1)^{|A|}, 
\end{array}$$
and 
$$
\begin{array}{ccl}
Z_P(\beta,q,\bft) &\leq& \left( \displaystyle\frac{q+h}{q}\right)^{|E(\bft)|}  q^{|V(\bft)|+ \frac{2}{3}} 
\displaystyle\sum_{A\in G(\bft)} \left( (\sqrt[3]{q^2}-1) \displaystyle\frac{h}{q+h} \right)^{|A|}\\
  &\leq& \left( \displaystyle\frac{q+h}{q}\right)^{|E(\bft)|}  q^{|V(\bft)|+ \frac{2}{3}} 
  \left(1 + \displaystyle\sum_{k\geq 1} \Omega_k(\bft)u^k\right),
\end{array}$$
where  $\Omega_k(\bft)= | \{A\in G(\bft) : |A|=k\} |$ and $u=(\sqrt[3]{q^2}-1) \displaystyle\frac{h}{q+h}$. An simple estimation
establish that $\Omega_k(\bft) \leq \binom{|E(\bft)|}{k}$. Thus, we obtain the estimate 
\begin{equation}\label{eq2.33}
Z_P(\beta,q,\bft) \leq \left( \displaystyle\frac{q+h}{q}\right)^{|E(\bft)|}  q^{|V(\bft)|+ \frac{2}{3}} (1 + u)^{|E(\bft)|}.
\end{equation}
\begin{proof}[Proof of Theorem \ref{theo_bounds_main2}]
Employing the inequality (\ref{eq2.33}) and Table \ref{table1}, we write the bound (\ref{eq2.33}) for the partition
function of the Potts model on $\bft$ in terms of the number of triangles $n(\bft)$. This inequality is true for any graph,
therefore similar computations serve for the dual model. We consider the two cases of interest separately.

\vspace{0.5cm}

{1. \it The model on CTs}: In this case, we can to write the bound (\ref{eq2.33}) for the partition function in 
terms of volume $n(\bft)$ of the triangulation as follow
\begin{equation}\label{Z_N_B}
Z_P (\beta,q,\bft) \leq \left( \displaystyle\frac{q+h}{q}\right)^{\frac{3}{2}n(\bft)}  
q^{\frac{1}{2}n(\bft)+ \frac{2}{3}} (1 + u)^{\frac{3}{2}n(\bft)}.
\end{equation}
Utilizing this estimate, we obtain a new upper bound for the 
partition function of the $q$-state Potts model coupled to CTs
\begin{equation}
\Xi_N(\beta,\mu) \leq  q^{\frac{2}{3}} Z_N(\tilde{\mu}),
\end{equation}
where $\tilde{\mu}=\mu -\displaystyle\frac{3}{2} \ln\left( \displaystyle\frac{q+h}{q}\right) 
-\displaystyle\frac{1}{2}\ln q -\displaystyle\frac{3}{2}\ln(1 + u)$ and $Z_N(\tilde{\mu})$ is the 
partition function for pure CTs, defined in (\ref{et2-yamb}), on the cylinder $C_N$ with periodical
spatial boundary conditions and for the value of the cosmological constant $\tilde{\mu}$. 
Hence, 
the inequality 
\begin{equation}\label{eq2.35}
\mu > \displaystyle\frac{3}{2} \ln\left( q + e^\beta -1\right) + \ln 2
-\ln q +\displaystyle\frac{3}{2}\ln\left(1 + (q^{2/3} -1)\displaystyle\frac{e^\beta -1}{q+e^\beta -1} \right) 
\end{equation}
provides a sufficient condition for subcriticality behavior of the $q$-state Potts model coupled to CTs. 
Utilizing the  High-T expansion for $q$-state Potts model we get to obtained a better 
approximation of the critical curve. 
Inequality (\ref{eq2.35}) proof the part (a) of Theorem \ref{theo_bounds_main2}.

Now, using the duality relation of Theorem \ref{theo_duality_main1} and Eq. (\ref{eq2.35}), we obtain a new condition for 
subcriticality  behavior of the Potts model coupled to dual CTs
\begin{equation}\label{new_upperbound_dual}
\begin{array}{ccl}
\mu^* & > & \displaystyle\frac{3}{2}\beta^*  + \ln2 +   \displaystyle\frac{3}{2}\ln\left(1 + 
\displaystyle\frac{q^{2/3} -1}{e^{\beta^*}} \right)
\end{array}
\end{equation}
Inequality (\ref{new_upperbound_dual}) proof the part (b) of Theorem \ref{theo_bounds_main2}. This conclude the 
proof of Theorem \ref{theo_bounds_main2} because the same approach on dual triangulations does not improve the curves obtained.

\end{proof}

\section{$q=2$ (Ising) system}\label{conn_ising}
In this section we only consider Ising model on dual causal triangulations in order to compare our results with 
the previous results about the Ising model coupled dual causal triangulations present in \cite{HeAnYuZo:2013}. That early
work utilize the transfer matrix method in order to provide a curve $\mu^*=\psi(\beta^*)$ 
(blue line in Figure \ref{fig5.5}) that satisfies 
$$\frac{d\psi}{d \beta^*}(0^+)=0,$$ 
This property is in concordance with our result because the critical curve satisfied the same property, see Corollary 
\ref{asympt_beh}.  
\begin{figure}[t!]
\begin{center}
\includegraphics[width=9cm]{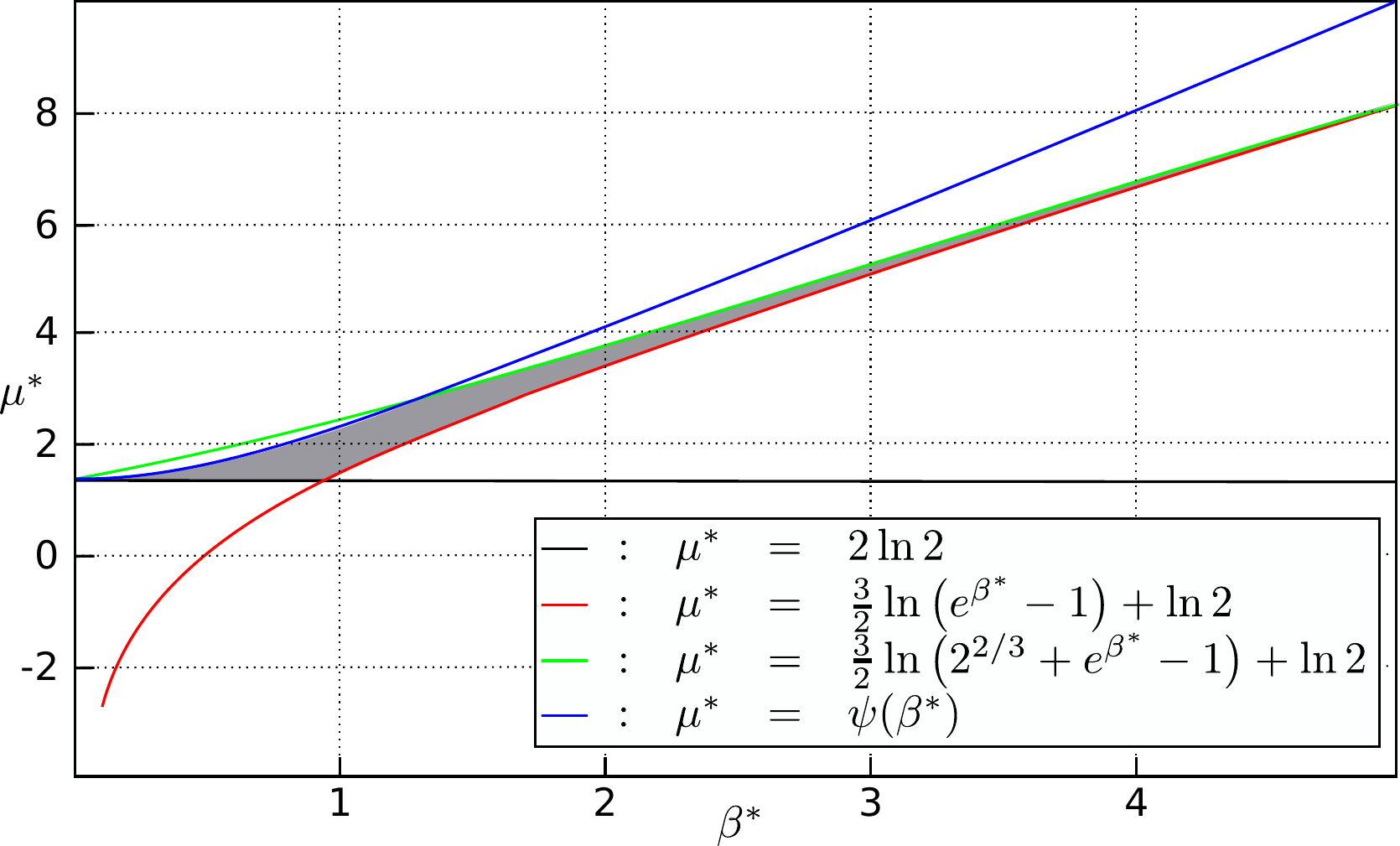}
\end{center}
\caption{Region where the critical curve of the Ising model coupled to dual CDTs can be located. Curves green, black and red 
are establish in Theorem \ref{theo_bounds_main2}, and the curve blue was compute in \cite{HeAnYuZo:2013}.}
\label{fig5.5}
\end{figure}
Now, we define the  functions 
$$\varphi_{inf}(\beta^*)=\max\left\{ 2\ln2,\frac{3}{2}\ln\left(e^{\beta^*}-1 \right) +\ln2 \right\},$$
and
$$\varphi_{sup}(\beta^*)=\min\left\{\psi(\beta^*),\frac{3}{2}\ln\left( 2^{2/3}+e^{\beta^*}-1 \right) +\ln2 \right\}.$$
Denoting by $\gamma_c^{I}$ the critical curve of the annealed Ising model, functions $\varphi_{inf}$ and $\varphi_{sup}$
provide a lower and an upper bound for the critical curve of the model, that is, 
$\varphi_{inf}(\beta)\leq \gamma_c^{I}(\beta)\leq \varphi_{sup}(\beta)$,
for any $\beta>0$. The upper bound $\varphi_{sup}$ improve bonds present in \cite{HeAnYuZo:2013}, \cite{cerda}, 
and confirm the expected behavior for the critical curve described in \cite{Ambjorn:1999gi}. 
Furthermore, in \cite{Ambjorn:1999gi} authors show numerical evidence that the model have a phase transition. 
Denoting by $\beta_c^{I}$ and $\beta_c^{*,I}$ the critical value of the coupled Ising model and its dual, 
respectively, and employing the duality relation of Theorem \ref{theo_duality_main1}, we have that 
\begin{equation}
\beta_c^{I} < \displaystyle\frac{1}{2}\ln(1+\sqrt{2})< \beta_c^{*,I}.
\end{equation}
Note that, this upper bound for the critical value of an Ising model coupled to causal triangulations do 
not improve the value $\beta_c^{I}\approx 0.2522$ computed in \cite{Ambjorn:1999gi} utilizing Monte Carlo simulation.

Finally, we  have that the free energy satisfy the following inequality
$$ \ln \Lambda\left(\mu^* -\varphi_{inf}(\beta^*) +\ln2\right) <
\lim_{N\to\infty} \frac{1}{N}\ln \Xi_N(\beta^*,\mu^*) < \ln \Lambda\left(\mu^* -\varphi_{sup}(\beta^*)+\ln2 \right).$$

\subsection*{Acknowledgements.} 
We would like to thank Prof. Y. Suhov and A. Yambartsev  for comments on a preliminary version of this article and for very valuable 
discussions and his encouragement. This work was supported by FAPESP, projects 2012/04372-7, 2013/06179-2 and 
2014/18810-1. 
Further, the author thanks the IME at the University of S\~ao Paulo for warm  hospitality.


\providecommand{\href}[2]{#2}

\end{document}